\documentclass[preprint,11pt,3p]{elsarticle}

\usepackage{graphicx}
\usepackage{tabularx}
\usepackage{amssymb}
\usepackage{paralist}
\usepackage{amsmath,mathtools}
\usepackage{amsthm}
\usepackage{makecell}
\usepackage{enumitem}
\usepackage{float}
\usepackage{hyperref}
\usepackage{multirow}
\usepackage[table]{xcolor}
\usepackage{pifont}
\usepackage{tikz}
\usepackage{csquotes}
\usepackage{rotating}
\usepackage{booktabs}
\usepackage{array}
\usepackage{longtable}
\usepackage{url}
\usepackage{doi}
\usepackage{subcaption}
\usepackage{bbm}
\usepackage{threeparttable}
\usepackage{nomencl}
\usepackage{arydshln}
\usepackage{algorithmicx,algorithm,algpseudocode}
\usepackage{siunitx}
\usepackage{cleveref}

\biboptions{sort&compress}
\definecolor{mygreen}{rgb}{0,0.6,0}
\algdef{SE}[DOWHILE]{Do}{doWhile}{\algorithmicdo}[1]{\algorithmicwhile\ #1}%
\algrenewcommand\algorithmicrequire{\textbf{Input:}}
\algrenewcommand\algorithmicensure{\textbf{Output:}}
\makenomenclature
\newcommand{\mc}{\mathcal}
\newcommand{\ol}{\overline}
\newcommand{\ul}{\underline}

\allowdisplaybreaks

\newtheorem{lemma}{Lemma}

\newtheorem{prop}{Proposition}

\crefname{prop}{Proposition}{Propositions}
\Crefname{prop}{Proposition}{Propositions}
\crefname{lemma}{Lemma}{Lemmas}
\Crefname{lemma}{Lemma}{Lemmas}
\crefname{assum}{Assumption}{Assumptions}
\Crefname{assum}{Assumption}{Assumptions}
\hypersetup{colorlinks=true, urlcolor=blue, linkcolor=blue, citecolor=blue}

\begin{document}

\begin{frontmatter}

\title{Trajectory-Based Co-Optimization of Arrival Scheduling and Descent Path Design in the Terminal Maneuvering Area}

\author[inst1]{Yutian Pang\corref{cor1}}
\cortext[cor1]{Corresponding author.}
\ead{yutian.pang@austin.utexas.edu}

\author[inst1]{John-Paul Clarke}

\affiliation[inst1]{organization={Department of Aerospace Engineering and Engineering Mechanics},
            addressline={The University of Texas at Austin},
            city={Austin},
            postcode={78712},
            state={TX},
            country={USA}}

\begin{highlights}
\item We propose a four-dimensional trajectory-based optimization framework to automate traffic control in the TRACON area.
\item The algorithm co-optimizes each aircraft's descent design and lateral path extension, with every candidate descent verified in six-degree-of-freedom simulation.
\item The co-optimized continuous descents save 15\% of fleet fuel below saturation, and delayed deceleration saves 23\%.
\item On a common $3.0^\circ$ glideslope angle, delayed deceleration saves 8--18\% relative to an optimized continuous descent.
\item Experiments show that wind can change the fuel burn of an individual arrival by up to 81\%.
\item At the scenario level these aircraft-specific wind effects largely cancel, and wind explains at most 4\% of the variability in fleet fuel.
\end{highlights}

\begin{abstract}
Terminal arrival scheduling and descent procedure design are studied in two largely separate literatures. Scheduling models reduce each aircraft to a travel time and deliver target landing times, and fuel-efficient descent procedures are designed one aircraft at a time with the schedule taken as given, although both decide where an arriving aircraft absorbs delay before final approach. Existing formulations therefore cannot trade a slower, earlier-configuring descent against level track miles, a schedule that is efficient in time can be expensive in fuel, and autonomous or reduced-crew operations will need a single trajectory plan that ground automation and the flight management system both accept. To close this gap, we propose a four-dimensional terminal arrival scheduler that selects each aircraft's lateral path extension, glideslope-capture distance, and flap-deployment trigger speeds in one decision. We evaluate every candidate idle-thrust descent offline with a wind-aware backward plan and a six-degree-of-freedom forward simulation that returns descent time, fuel burn, minimum track length, and stabilized-approach feasibility, and a rolling-horizon scheduler commits one verified descent and one extension per aircraft under wake-separation constraints and observed entry winds. Two Atlanta terminal airspace case studies quantify the benefit. We show that co-optimized continuous descents save about 15\% of fleet fuel below saturation in a free-descent environment and that delayed deceleration saves about 23\%, while on the six published Runway 8L arrival flows the charted altitude floors remove 31\% of the design lattice and reduce the savings to 9--10\% and 20--21\%, respectively. We also find that wind moves single-aircraft descent fuel by 34--81\% yet explains at most 4\% of fleet fuel variance, because aircraft-specific wind effects average out across a scenario.
\end{abstract}

\begin{keyword}
Air Traffic Control \sep Trajectory-Based Operations \sep Continuous Descent Approach \sep Delayed Deceleration Approach \sep Arrival Scheduling and Sequencing \sep Stochastic Optimization
\end{keyword}

\end{frontmatter}


\section{Introduction}\label{sec:introduction}

Trajectory-based operations (TBO) are the operating concept that air navigation authorities have adopted for the next generation of air traffic management. ICAO developed the Global TBO Concept and integrated it into the Global Air Navigation Plan \cite{icao2018tbo}. In the United States, the NextGen concept of operations placed the four-dimensional trajectory at its center \cite{jpdo2007concept}, the Federal Aviation Administration (FAA) maintains TBO as a NextGen program \cite{faa_tbo_2026}, and NASA developed the Management by Trajectory concept \cite{fernandes2018concept}. In Europe, SESAR and EUROCONTROL flew the first initial 4D demonstration in 2012 and continue the work in the 4D Skyways project \cite{eurocontrol2012i4d,sesar2023skyways}. In China, the Civil Aviation ATM Modernization Strategy sets TBO as one of three major goals, and a dual-aircraft TBO validation flight was completed in 2024 \cite{caac2022tbo,caac2025tbo}. Air navigation service providers from eight Asia-Pacific states, including China, Japan, Singapore, and the United States, signed a TBO Pathfinder agreement in 2023 \cite{caac2025tbo}. Under TBO, each flight is managed by a four-dimensional (4D) trajectory that ground and airborne systems develop together and then hold as a common reference \cite{faa_tbo_2026,icao2018tbo,enea2012comparison,gardi2016multiobjective}. The expected benefits are a more predictable traffic flow, fuel-efficient climbs and descents, lower controller workload, and earlier conflict resolution \cite{torres2012integrated,gardi2016multiobjective}. These benefits depend on the ground and airborne systems holding the same trajectory, and inconsistent air and ground trajectories remain a limitation of current operations \cite{icao2018tbo,klooster2010trajectory,torres2011trajectory,bronsvoort2016role,mondoloni2020aircraft}.

Terminal airspace is where this requirement is hardest to meet. At a busy international hub such as Atlanta, six published arrival flows enter from different directions and must merge into one stream that satisfies wake-turbulence separation at the final approach fix (FAF) \cite{ren2009contrast,balakrishnan2006scheduling,bennell2011airport}. \Cref{fig:ops_a} shows this environment as it appears on a terminal radar display for Runway 8L in east-flow operations. Controllers achieve the merge with speed control, path stretching, and vectoring, issued one clearance at a time. Terminal-area inefficiency accounts for 1.5 to 4.5\% of the fuel burned on a flight in the United States, and anticipated delay burns only 10 to 20\% of the fuel of unanticipated delay \cite{ryerson2014time}. Arrivals at 27 European airports spent an average of 3.4 minutes of additional time inside the 40 nmi arrival area in 2024 \cite{prc2025prr}, and level segments flown in descent cost about ten times the fuel of level segments flown in climb \cite{prc2020vertical}. The penalty grows as demand approaches the runway acceptance rate, because delay is then absorbed in level flight at metering altitude, where fuel per nautical mile is high \cite{nikoleris2016comparison,ng2024optimization}. Clearance-by-clearance control also limits predictability. The ground system and the flight management system (FMS) predict the same flight from different information \cite{mondoloni2020aircraft}, and for a sample optimized profile descent the ground prediction without intent exchange erred by up to 11{,}000 ft in altitude and 70 s in time \cite{konyak2009improving}. A terminal arrival under TBO must resolve both costs where the ground schedule and the airborne descent meet. \Cref{fig:ops_b} previews the alternative developed in this paper, in which each arrival leaves its metering fix with a committed lateral path, descent design, and time at the FAF.

\begin{figure*}[!t]
\centering
\begin{subfigure}[t]{0.492\textwidth}
\centering
\includegraphics[width=\linewidth]{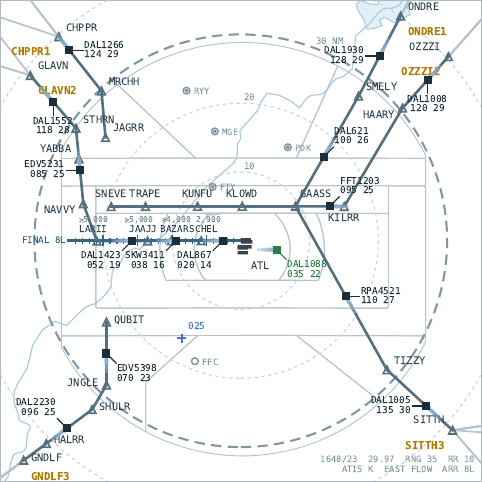}
\caption{The current operation of KATL RW08L charted procedures on an A80 terminal radar display.}
\label{fig:ops_a}
\end{subfigure}\hfill
\begin{subfigure}[t]{0.492\textwidth}
\centering
\includegraphics[width=\linewidth]{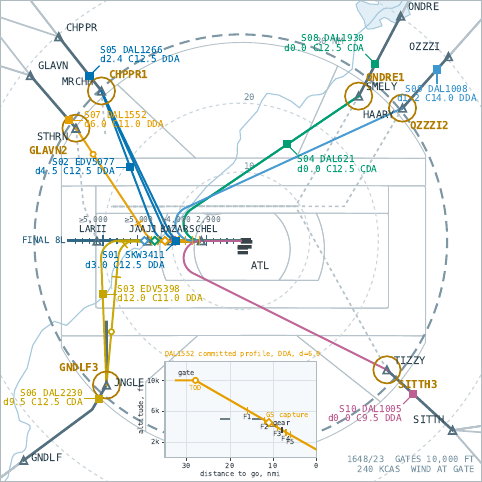}
\caption{A demonstrative view of the proposed 4D trajectory-based arrival manager.}
\label{fig:ops_b}
\end{subfigure}
\caption{Runway 8L arrivals at Atlanta in east-flow operations, rendered in the style of a terminal radar display with true north up and range rings every 10 nmi. In panel (a), the six published RNAV arrival flows merge under radar vectoring onto the common final approach course, which carries the charted altitude floors of the ILS 8L fixes, and data blocks give the callsign, the altitude in hundreds of feet, and the ground speed in tens of knots. In panel (b), each arrival receives at its metering gate a committed four-dimensional trajectory in the color of its flow, with a centerline extension $d_i$, a glideslope-capture distance (diamond), a flap trigger schedule (ticks), and a scheduled time at the FAF, and the inset shows the committed vertical profile of one aircraft. Charted geometry comes from the FAA NASR and CIFP databases, and the traffic is illustrative.}
\label{fig:ops}
\end{figure*}

The case for automating terminal arrival management rests on controller workload and safety. The capacity of arrival and transition airspace is limited mainly by the workload of monitoring and controlling separation \cite{erzberger2004transforming}, and that workload rises with traffic complexity in ways that an aircraft count alone does not capture \cite{hilburn2004cognitive,loft2007modeling}. Radar vectoring, the usual merge technique, requires rapid controller decisions and time-critical crew execution, and it produces workload peaks and frequency congestion \cite{boursier2007merging}. Staffing has tightened at the same time. The National Airspace System Safety Review Team counted 1{,}002 fewer fully certified controllers in August 2023 than in August 2012 and found that staffing shortfalls, overtime, and aging technology are rendering the current level of safety unsustainable \cite{nas_srt2023}, and 20 of 26 critical facilities were below the FAA staffing threshold in 2023 \cite{oig2023staffing}. Decision support is the established response. The Center-TRACON Automation System introduced trajectory-based sequencing and spacing advisories for terminal controllers \cite{denery1995center,halverson1992systems}. Its descendants, the Terminal Area Precision Scheduling and Spacing system and the Terminal Sequencing and Spacing (TSAS) system, extend time-based metering into the terminal area with scheduled times at constraint points and speed advisories for the controller \cite{swenson2011design,baxley2012nasa,thipphavong2013evaluation}. In human-in-the-loop simulation these tools raised the share of arrivals that completed a performance-based navigation arrival during high demand from 42\% to 92\% and reduced inter-arrival spacing error by 25 to 35\% \cite{robinson2015enabling}. These results indicate that decision support lowers workload while improving efficiency and spacing consistency. The terminal component has not reached the field. The FAA deferred TSAS in 2023 without a new implementation date \cite{gao2023atc}. In these systems, and in the arrival scheduling literature more broadly \cite{bennell2011airport,ng2024optimization}, the output is a time or an advisory, and the trajectory that realizes it remains with the controller and the crew.

A previous study computed that trajectory from the horizontal view \cite{pang2026trajectory}. It expressed the three-segment arrival path as a closed-form function of the base-leg extension and committed, for each aircraft entering the Atlanta Large TRACON (A80), an extension distance and a commanded speed per segment that together satisfy wake separation at the FAF. The output was a trajectory a controller can issue and a pilot can fly. It was not yet a 4D trajectory in the TBO concept. The vertical profile was fixed, the commanded speeds stood in for what the aircraft would fly, and the fuel objective was represented by path-stretch and speed penalties. The descent, the deceleration schedule, and the flap configuration were left to the FMS and a published procedure. A 4D trajectory includes the vertical profile together with the time at each point \cite{faa_tbo_2026,icao2018tbo}, and for an idle-thrust descent the two are inseparable. The time an aircraft takes to descend depends on its drag history, which depends on when each flap detent is deployed and where the glideslope is captured. The same choices set the fuel burn \cite{park2015optimal,park2016vertical}, and they are bounded by certified placard speeds and by stabilized-approach criteria at 1{,}000 ft above ground \cite{ho2007methodology,kendall2020stochastic}. A 4D commitment in terminal airspace thus needs a scheduler that chooses the descent design together with the lateral path, a trajectory model that resolves configuration changes instead of a point-mass travel time, verification of each candidate against stabilized-approach criteria, and a treatment of wind at the aircraft so that the committed descent stays flyable at the wind actually encountered. The minimum descent track length of a design then bounds the lateral extension from below, which couples the two decisions.

The gain from the vertical decision comes from fuel-efficient descent procedures. The continuous descent approach (CDA) holds the engines near idle from top of descent to the runway and removes the level segments of a step-down arrival. Flight tests at Louisville and Los Angeles measured the fuel and noise reductions on revenue traffic \cite{clarke2004continuous,clarke2013optimized}, tailored arrivals uplinked the full profile by datalink \cite{coppenbarger2009field}, and the delayed deceleration approach (DDA) extends the idea by keeping the aircraft clean and fast deep into the arrival \cite{reynolds2016delayed,thomas2021modeling,dumont2012fuel}. These benefits hold for isolated aircraft. In dense traffic the spacing buffer that protects an uninterrupted descent consumes runway capacity, so the procedures are used mainly at night or in light traffic \cite{robinson2010benefits,cao2011evaluation,gui2023optimal}. The two problems meet where delay is absorbed. A lateral-only scheduler buys time with level track miles. A descent that configures earlier or captures the glideslope farther from the runway also buys time, often at lower fuel cost, and a scheduler that treats the descent as a fixed edge travel time cannot make that trade. Today the flight crew closes the gap. A pilot receives a vector, judges whether the aircraft can comply, and is required to tell the controller when an instruction would produce an unstabilized approach \cite{faa_ac91_79b}. That arbitration works, but it makes terminal trajectories hard to predict and keeps fuel-efficient descents out of dense traffic. Reduced-crew and autonomous operations remove the arbitrator. NASA and EASA have studied single-pilot and extended minimum-crew airline operations \cite{comerford2013single,bilimoria2014conceptual,schmid2020progressing,easa2022emco}, and NASA is funding research on autonomous cargo operations at scale \cite{nasa2021uli}. In that setting the ground plan and the airborne plan must agree before the descent begins, and the agreement must name the configuration changes as well as the times. The central question of this study is \textit{how much fuel can terminal arrival management save when the vertical descent design and the lateral path extension of every aircraft are chosen together, and how much of that saving survives the published procedure structure that the operation actually flies?}

Answering it requires decision variables that current scheduling formulations do not include. This study treats flap-deployment trigger speeds and glideslope-capture distance as vertical decision variables, bounded by the certified placard windows of each aircraft type. Each candidate is evaluated in a six-degree-of-freedom (6DOF) trajectory simulation. The simulation returns descent time, fuel burn, minimum descent track length, and a stabilized-approach verdict, so stabilization becomes a hard constraint on a simulated trajectory. To keep the fleet scheduler tractable, the design lattice is precomputed once per aircraft type, architecture, and wind node. The online scheduler then uses table lookups.

This paper extends the lateral scheduler of \cite{pang2026trajectory}. The vertical decision set builds on a companion study that optimized the flap trigger speeds and the glideslope-capture distance of a single-aircraft descent under wind uncertainty and verified each candidate design in physics-based simulation \cite{pang2026optimal}. Each arrival still extends its base leg by a variable distance upstream of the FAF, which preserves direct control over arrival time, and the commanded segment speeds are replaced by the speed histories of idle-thrust descent designs. The framework selects the landing order, track miles, glideslope-capture distance, and flap trigger schedule in one decision, and the committed object is a verified 4D trajectory rather than a target time. Two A80 case studies instantiate the framework. The first uses a free descent path to measure the value of full 4D freedom. The second applies the same traffic, winds, and fleet to the six published Runway 8L arrival flows. Repeating that experiment with charted final-approach altitude floors enforced and relaxed prices the published structure directly.

The main contributions of this work are fourfold.

\begin{itemize}
\item We propose a unified optimization formulation that commits the landing order, the track miles, the glideslope-capture distance, and the flap trigger schedule of every arrival, with a fuel objective in place of the surrogate throughput, path-stretch, and speed penalties of the lateral framework and with safety, fuel, and delay resolved in strict priority order. The aircraft configuration thereby becomes a fleet-level decision variable, optimized inside certified placard windows, whereas earlier work tuned the flap schedule for a single aircraft \cite{ho2007methodology,hogenhuis2008optimization,kendall2020stochastic,pang2026optimal} or fixed it in a point-mass descent \cite{park2015optimal,park2016vertical}. A wind-aware backward plan and a 6DOF forward simulation \cite{nagle2009tasat,ren2007separation} verify each candidate, so the stabilized-approach criteria act as hard constraints and the simulated minimum descent track length couples the descent to the lateral extension.

\item Wind uncertainty is modeled with an explicit information structure. Each aircraft observes its along-track wind at the metering gate and re-anchors its descent plan, yielding a two-stage stochastic program with complete recourse whose solution factorizes into per-node problems \cite{park2016vertical}. A crossed Monte Carlo design separates single-aircraft wind effects from fleet-level variance, and shows that wind moves individual descent fuel by 34--81\% yet explains at most 4\% of fleet fuel variance.

\item The CDA and DDA are compared on a common $3.0^\circ$ final and a shared capture-distance grid, which removes the glide-path-angle effect that earlier DDA demonstrations left mixed with the deceleration schedule \cite{reynolds2016delayed,thomas2021modeling,turgut2019effects}. At the matched angle the DDA saves 8--18\% relative to an already optimized CDA, and the two architectures produce ground-noise footprints within a few percent of each other \cite{eurocontrol_anp}.

\item The framework is evaluated on the six published Runway 8L arrival flows at Atlanta in east-flow operations \cite{ZTL_A80_LOA_2022,faa_dtpp_katl}, sweeping five sequencing policies and ten demand levels with 50 seeds each and common random numbers. The asymmetric flow geometry makes delay authority unequal across flows, and running the same traffic with the charted altitude floors enforced and relaxed shows that the floors remove 31\% of the design lattice and cost 3.6--6.7\% of fleet fuel.
\end{itemize}

In the rest of the paper, \Cref{sec:related_work} reviews arrival scheduling, terminal airspace optimization, and fuel-efficient descent procedures. \Cref{sec:methodology} presents the co-optimization framework. \Cref{sec:case1} reports the free-descent case study, and \Cref{sec:case2} reports the published-flow case study. \Cref{sec:discussion} discusses implications and limitations. \Cref{sec:conclusion} concludes.

\section{Literature Review}\label{sec:related_work}

The problem this paper addresses sits between two research communities that have grown largely in parallel. One community optimizes when aircraft land and in what order, with the trajectory reduced to a travel time. The other designs the descent of an individual aircraft and takes the schedule as given. We review each in turn, then the small number of studies that couple them, and then state the gap.

\subsection{Runway and Arrival Scheduling}\label{subsec:rw_runway}

Runway scheduling research is usually grouped into landing scheduling, departure scheduling and runway sequencing, with solution methods that span exact dynamic programming, mixed-integer programming, branch and bound, and a large body of heuristics \cite{bennell2011airport}. The recurring tension in the arrival case is between throughput and fairness. First-come-first-served is simple to explain and easy to accept, but it wastes capacity when a Heavy leader imposes a long wake gap on a Large follower. Constrained position shifting (CPS), proposed by \citet{dear1976dynamic}, bounds how far any aircraft may move from its first-come-first-served rank and recovers much of the optimal throughput while remaining defensible to operators. \citet{psaraftis1980dynamic} gave a dynamic program for the underlying sequencing problem, and \citet{balakrishnan2006scheduling} developed a network-based dynamic program that scales linearly in the number of aircraft and handles pair-specific wake minima, arrival-time windows and same-route precedence explicitly. The mixed-integer formulation of \citet{beasley2000scheduling} established the disjunctive representation that most later work reuses. Extensions cover holding patterns \cite{artiouchine2008runway}, departures and runway crossings \cite{anagnostakis2003runway}, arrival and departure interdependence \cite{diffenderfer2013automated}, and parallel-runway merges \cite{liang2017integrated}. A recent stream sharpens the time inputs to these models with machine learning, either by training the arrival-time predictor against the scheduling cost \cite{lui2024gradient} or by coupling learned predictions to the landing-scheduling program under uncertainty \cite{pang2024machine}. Across this literature the decision model stays a time-based program, and its output is a set of target landing times and delay assignments rather than the trajectories that would realize them.

Terminal-area automation in the United States grew out of the Center-TRACON Automation System, which combined trajectory prediction with the Traffic Management Advisor, the Descent Advisor and the Final Approach Spacing Tool to advise controllers on runway assignment, landing order and terminal spacing \cite{denery1995center,halverson1992systems}. The Final Approach Spacing Tool already produced speed and heading advisories that map onto terminal control actions, which makes it the operational ancestor of the present work. The Terminal Area Precision Scheduling and Spacing system and the Terminal Sequencing and Spacing system carried the same architecture into performance-based navigation arrivals, adding terminal metering to scheduled times at constraint points, controller-managed spacing tools, and flight-deck interval management \cite{swenson2011design,baxley2012nasa,thipphavong2013evaluation,robinson2015enabling}. In all of these systems the trajectory engine serves prediction, the advisories are generated one aircraft at a time, and the vertical profile is whatever the published procedure and the flight management system produce. A parallel line of work reduces the mismatch between the ground prediction and the airborne trajectory through intent exchange and downlink of the extended projected profile \cite{konyak2009improving,torres2011trajectory,bronsvoort2016role,mondoloni2020aircraft}. That work synchronizes two predictions of a trajectory decided elsewhere. The present work decides the trajectory in one place, so that a single object exists for both sides to hold.

\subsection{Terminal Airspace Scheduling and Delay Absorption}\label{subsec:rw_tasp}

A second group of studies pushes the optimization boundary upstream from the runway into the terminal area and formulates the terminal airspace scheduling problem over a network of waypoints and route segments \cite{capozzi2009optimal,desai2016optimization,ng2024optimization}. Each arrival route becomes a chain of nodes joined by edges whose travel-time envelopes come from speed limits or from a set of precomputed profiles, and the optimizer selects time-over-node variables subject to those envelopes and to separation at the runway. Metaheuristics have been applied to the same structure with automated vectoring and point merges as the delay-absorption actions \cite{dhief2023meta,gui2025metaheuristic}, and recent work in this line learns structured vector patterns from historical trajectories \cite{dhief2025automating} or replans the arrival stream through a runway direction change \cite{jiang2025optimized}. The abstraction scales to large terminal networks and accommodates holding stacks and merge points, but it does not compute the trajectory. A time-over-node assignment carries no guarantee that a given aircraft type can fly it from a given entry heading, and the geometric coupling between path stretch and arrival time disappears, since a base-leg extension is a continuous control whose effect depends on the aircraft position and the turn radius. Our earlier study made the opposite abstraction choice and gave up network generality for an analytic path geometry that exposes the base-leg extension and the segment speeds as decision variables \cite{pang2026trajectory}. The present work keeps that geometry and replaces the commanded speeds with the vertical descent design.

The fuel price of delay absorption has also been studied on its own. \citet{nikoleris2016comparison} compared absorbing a required delay by slowing the cruise and the descent against flying extra level distance, for ten Boeing and Airbus types, and found that reducing the descent speed first and the cruise speed second is the cheapest strategy for every type. \citet{saez2023time} compared powered descents along the published route against idle descents along re-negotiated routes and reported that the route freedom, rather than the thrust setting alone, carries most of the difference. \citet{dalmau2018fast} showed that descent trajectories can be updated quickly enough to meet a time constraint in flight. Read from the scheduling side, these results show that the location of delay absorption matters as much as the amount of delay absorbed.

\subsection{Design and Optimization of Fuel-Efficient Descent Procedures}\label{subsec:rw_cda}

The single-aircraft benefit of the continuous descent is well established, and the literature that established it progressed from methodology to demonstration to benefit accounting. The methodology coupled trajectory simulation to impact assessment, beginning with the noise-abatement design framework of \citet{clarke1997systems} and maturing into the fast-time Monte Carlo simulator Tool for the Analysis of Separation and Throughput (TASAT) \cite{ren2007separation,nagle2009tasat}, whose separation-analysis method accounts for inter-flight wind variation and flight management system behavior and was validated against flight test \cite{ren2007modeling}. Demonstrations then carried the procedure into revenue operations, from the Louisville CDA flight test \cite{clarke2004continuous} to daily optimized profile descents at Los Angeles \cite{clarke2013optimized} and tailored arrivals with datalink uplink of the whole profile \cite{coppenbarger2009field}. Benefit studies generalized the result across operating assumptions, including en route speed optimization ahead of the descent \cite{lowther2008enroute}, standard rather than night-only use \cite{cao2011evaluation}, and fleet-wide fuel and emission accounting at Schiphol \cite{ellerbroek2018fuel}. What this line leaves open is whether the procedure parameters themselves are optimal, and what the procedure costs once traffic constrains it.

A smaller body of work treats the procedure parameters as decision variables, and it converges on two findings that shape the present formulation. The first is that the deceleration schedule drives the fuel and emission outcome. \citet{park2015optimal} identified the deceleration schedule, not the descent gradient alone, as the driver in an optimal control setting, \citet{park2016vertical} confirmed the finding under along-track and cross winds modeled as functions of altitude, and \citet{turgut2019effects} isolated the complementary contribution of the flight path angle itself between 1.0 and 4.5 deg, a factor a matched-angle comparison must control. The second is that the flap schedule is both a lever and a risk. \citet{ho2007methodology} optimized the parameters of a noise-abatement approach, including the flap schedule, and showed that variable pilot delay in flap extension makes glideslope interception uncertain and can threaten approach stability, \citet{kendall2020stochastic} therefore treated flap-deployment delay as a random variable and optimized the procedure with a simulation metamodel, and \citet{hogenhuis2008optimization} folded such trade-offs into a multi-objective optimization of area-navigation arrivals. In a companion study, we carried this line to the delayed-deceleration setting and optimized the flap-deployment trigger speeds and the glideslope-capture distance of both architectures to minimize expected fuel under wind uncertainty, subject to a chance constraint on stabilized approach and with each candidate evaluated in 6DOF simulation \cite{pang2026optimal}. This literature is single-aircraft throughout. It designs a procedure for a fleet to fly, but does not include scheduling.

\subsection{Delayed Deceleration Approaches}\label{subsec:rw_dda}

The delayed deceleration approach keeps the aircraft clean and fast far into the arrival and deploys the landing flap late, which lowers both airframe and engine noise over the community and reduces fuel. \citet{dumont2012fuel} quantified the fuel burn reduction potential, \citet{reynolds2016delayed} assessed the noise, and \citet{thomas2021modeling} built a deceleration-rate model for several airframes, reported 4--8 dB noise reductions under the flight track away from the stabilization point, and tested flyability in the 2019 ecoDemonstrator program. Two features of that literature matter here. First, the procedures were flown with a steeper final approach as part of the same design change, so the reported benefit mixes the deceleration schedule with the glide path angle. The companion study \cite{pang2026optimal} separated the two design changes at matched final angles of $3.00^\circ$, $3.50^\circ$, and $3.77^\circ$ for four airframes, found that the glideslope angle carries a large share of the expected-fuel saving, and showed that the demonstrated $3.77^\circ$ final pushes the sink rate on final approach past the 1{,}000~ft/min element of the stabilized-approach criteria. Second, the assessment is per-aircraft. It does not determine whether a fleet of delayed-deceleration arrivals can be sequenced to a runway at capacity or what sequencing costs.

\subsection{Coupling the Descent Procedure to the Schedule}\label{subsec:rw_coupling}

A small number of studies treat the two problems together, and they share one architectural choice: the schedule remains the primary decision, and the descent adapts within the time window the schedule assigns. \citet{robinson2010benefits} established the cost of that arrangement by quantifying how much of the continuous descent benefit survives scheduling constraints in high-density terminal airspace, and found the benefit sensitive to the spacing buffer the procedure requires. Later work reduced that cost from two directions. \citet{saez2020automation} and \citet{saez2021automated} adapted the trajectory to the schedule, generating dynamic arrival routes with mixed-integer programming and then solving an optimal control problem per aircraft to produce a neutral idle descent that meets the assigned time of arrival. \citet{gui2023optimal} adapted the airspace to the procedure instead, offsetting entry paths and extending downwind and base legs so that as many aircraft as possible can fly a continuous descent operation, and \citet{gui2024decision} moved furthest toward a joint decision by generating a menu of candidate descent trajectories per aircraft, covering step-down, continuous descent, and continuous descent with lateral path stretching, and assigning conflict-free trajectories with a variable neighborhood search.

\citet{gui2024decision} is closest in architecture. Their menu contains three procedure categories generated by point-mass trajectory optimization. Here, the menu is a lattice over glideslope-capture distance and flap trigger speeds for two architectures, with each entry verified in 6DOF simulation against stabilized-approach criteria. Their lateral stretch is a route modification priced by length. Here, the base-leg extension is bounded below by the minimum descent track length of the committed design, coupling the two decisions. Their evaluation is deterministic. Here, observed wind enters the information structure, and quadrature expectations separate demand and wind contributions. These studies do not quantify the cost of published crossing restrictions or treat aircraft configuration as a scheduling decision.

\subsection{Research Gap}\label{subsec:rw_gap}

Three gaps follow from this review. First, scheduling and descent-procedure design are usually optimized in separate models. The scheduler therefore cannot use a slower descent design in place of level track miles, even when the descent design is cheaper. The division also separates ground-issued actions from the airborne plan that executes them. That structure reflects current crew-mediated operations, but it is poorly suited to autonomous or reduced-crew arrivals, where the ground schedule and airborne plan must agree in advance and include configuration changes. Second, vertical decisions are often reduced to descent speed, flight path angle, or a choice among named procedure categories. Flap-deployment trigger speeds inside certified placard windows are a real degree of freedom. They set the drag history of an idle-thrust descent and interact directly with stabilized-approach criteria. Optimizing them requires a model that resolves configuration changes and verifies the resulting trajectory, beyond the point-mass abstraction used in most scheduling studies. Third, most evaluations simplify the operating environment. Terminal scheduling formulations often treat wind as a scenario perturbation applied to a fixed plan, so the descent cannot re-anchor top of descent or deceleration segments to the wind actually observed by the aircraft. Fuel-efficient descents are also commonly evaluated in unconstrained or lightly constrained environments with idealized arrival geometry. The resulting fleet-level effect of wind and the fuel cost of published crossing restrictions and asymmetric vectoring geometry are therefore not measured against the same baseline as the benefit.

The formulation addresses these gaps by jointly committing vertical design and lateral extension, including flap triggers in the vertical decision set, verifying candidates in 6DOF simulation, modeling observed wind explicitly, and comparing two case studies that differ only in whether published structure is enforced.

\section{Methodology}\label{sec:methodology}

This section presents the optimization framework for full 4D TBO within the TRACON airspace. It couples the lateral arrival scheduler of \cite{pang2026trajectory} with a vertical descent-procedure design layer adapted from the single-aircraft procedure optimization of \cite{pang2026optimal}, yielding a four-dimensional (4D) trajectory-based operations (TBO) formulation. Each aircraft $i$ extends its base leg by a variable distance $d_i$ upstream of the final approach fix (FAF), which gives the scheduler direct control over arrival time. The main change from the lateral framework is the handling of speed profiles. Different from \citet{pang2026modeling}, commanded segment speeds are no longer decision variables in this formulation. Instead, each arrival's calibrated-airspeed (CAS) history follows from an idle-thrust descent design defined by glideslope-capture distance and flap-deployment trigger schedule, which is a backward-integration process. In such settings, arrival time control comes from the joint choice of vertical design and lateral extension. The fuel cost of each flight is evaluated through flight simulations with real jet engine parameters \cite{ren2007separation, ren2007modeling, ren2009contrast}.

The section proceeds in five parts. \Cref{subsec:flow} defines the mixed-fleet traffic process and pair-specific wake-turbulence separation over four corner feeder gates. \Cref{subsec:geometry} reviews the analytic path geometry of \cite{pang2026trajectory}, which maps extension $d_i$ to total track distance $D_i(d_i)$. \Cref{subsec:wind} introduces the wind model and information structure. \Cref{subsec:vertical,subsec:oracle} define the CDA and DDA design spaces and the wind-aware descent evaluation. \Cref{subsec:sequencing,subsec:cps} present the fleet scheduler, including first-on-final-first-serve (FOFFS) sequencing, per-aircraft joint commitment, and constrained position shifting (FOFFS-CPS$_k$).

Two case studies instantiate the framework. Case Study~1 (\Cref{sec:case1}) considers an idealized future-operations environment with four symmetric corner fixes, a free descent path and no published crossing restrictions, so that aircraft performance, placard limits and the stabilized-approach criteria are the only constraints on the vertical profile. Case Study~2 (\Cref{sec:case2}) transfers the same framework to the six published arrival flows of Runway 8L at Atlanta, where the descent must respect the charted at-or-above crossings on the common final approach course and the arrival geometry is asymmetric. Running Case Study~2 with those crossings enforced and relaxed measures the cost of the published structure against a common baseline.

\subsection{Generation of Arrival Flows}\label{subsec:flow}

Arrival traffic originates from the four corner fixes $\mc K=\{\text{DALAS},\text{LOGEN},\text{HUSKY},\text{TIROE}\}$ that surround the terminal airspace. Each corner generates an independent shifted-Poisson stream \cite{pang2026modeling}, which is consistent with the near-exponential inter-arrival distributions reported at major U.S. airports \cite{willemain2004statistical}. The inter-arrival time at corner $\kappa$ is $\Delta t = T_{\mathrm{sep}} + X_\kappa$ with $X_\kappa \sim \mathrm{Exp}(\lambda_\kappa)$. Here $T_{\mathrm{sep}}=90$~s is the entry separation buffer and $\lambda_\kappa$ is the per-corner arrival rate. We aggregate the streams into the global aircraft set $\mc I = \{1,\dots,n\}$, sorted by entry time $\tau_i$, and each aircraft inherits the entry coordinates $C_i=(X_{C_i},Y_{C_i})$ of its corner.

The descent simulation requires a calibrated performance model for every airframe, so we draw the fleet uniformly from the four simulator-calibrated types in \Cref{tab:ac_params}. Two types are Large (A319, B737-800) and two are Heavy (A340-300, B767-400ER). This fleet replaces the three-class mix of \cite{pang2026trajectory}. We omit the Small class here due to lack of jet engine parameters, and we renormalize the class shares to one half Heavy and one half Large. The minimum time-based separation at the FAF between a leading aircraft of class $c_l$ and a trailing aircraft of class $c_t$ follows the wake-turbulence matrix $T^{\mathrm{wake}}_{\mathrm{sep}}(c_l,c_t)$ of \cite[Table~2]{pang2026trajectory}. The effective required separation for the $k$-th landing slot is,
\begin{equation}
T_{\mathrm{sep}}^{\,k} = \max ~\!\bigl(
T^{\mathrm{wake}}_{\mathrm{sep}}(c_{\Pi(k-1)},c_{\Pi(k)}),\,
T_{\mathrm{rwy}}(\Pi(k))\bigr),
\label{eq:sep_eff}
\end{equation}
where $\Pi$ is the landing sequence and $T_{\mathrm{rwy}}$ is the runway-occupation time.

\begin{table}[t]
\centering
\caption{List of airframes and specifications studied in this work. $V_{\mathrm{ref}}$ is the reference landing speed. $V_{\mathrm{cap}}^{\mathrm{CDA}}$ is the CDA glideslope-capture CAS, with the landing configuration complete before capture. $V_{\mathrm{cap}}^{\mathrm{DDA}}$ is the DDA capture CAS, equal to the landing-flap placard limit so that the landing flap deploys on final.}
\label{tab:ac_params}
\begin{tabular}{lcccccc}
\toprule
Type & Class & Mass (lb) & $V_{\mathrm{ref}}$ (kt) &
$V_{\mathrm{cap}}^{\mathrm{CDA}}$ (kt) &
$V_{\mathrm{cap}}^{\mathrm{DDA}}$ (kt) & $T_{\mathrm{rwy}}$ (s)\\
\midrule
A319      & Large & 130{,}000 & 125 & 165 & 177 & 66\\
B737-800  & Large & 146{,}000 & 141 & 170 & 185 & 62\\
A340-300  & Heavy & 400{,}000 & 122 & 165 & 180 & 85\\
B767-400ER& Heavy & 320{,}000 & 147 & 150 & 183 & 85\\
\bottomrule
\end{tabular}
\end{table}

\subsection{Lateral Path Geometry}\label{subsec:geometry}

The lateral path from the terminal control point (TCP) entry to the FAF consists of a tangent leg of length $d_{L,i}(d_i)$, a radius-to-fix (RF) arc of length $r\,\theta_i(d_i)$ with turn radius $r=2.5$~nmi, and a runway-aligned final extension $d_i \in [0, d_{\max}]$ with $d_{\max}=27.5$~nmi. We restate the closed form of \cite[Sec.~2.2]{pang2026trajectory} here, because the co-optimization uses two of its properties that the lateral study did not need, namely the monotonicity of the track distance in $d_i$ (\Cref{lem:monotone_D}) and its role as an upper bound on the descent track length (\Cref{eq:dmin_floor}).

Place the origin at the runway threshold with the $x$ axis along the extended runway centerline, and let $p_i = (x_i, y_i)$ be the TCP entry point and $F = (x_F, 0)$ the FAF. Let $s_i = +1$ when the aircraft turns onto final from the north and $s_i = -1$ from the south. Extending the final by $d_i$ displaces the RF turn center to,
\begin{equation}
c_i(d_i) = \bigl(x_F - d_i,\; s_i r\bigr),
\label{eq:turncenter}
\end{equation}
so the arc is tangent to the centerline at $(x_F - d_i, 0)$ and the aircraft flies the remaining $d_i$ straight to the FAF. Write $v_i(d_i) = p_i - c_i(d_i)$ for the center-to-entry vector, $\rho_i = \lVert v_i \rVert$, and $R(x,y) = (-y, x)$ for the rotation by a quarter turn. The tangent leg is the exterior tangent from $p_i$ to the turn circle, so its length follows from the right triangle formed by $p_i$, $c_i$ and the tangent point,
\begin{equation}
d_{L,i}(d_i) = \sqrt{\rho_i^2 - r^2}, \qquad \rho_i > r .
\label{eq:tangentleg}
\end{equation}
Projecting $v_i$ onto the radius direction at the tangent point and onto its perpendicular gives the inbound radius vector in closed form,
\begin{equation}
u_i(d_i) = \frac{r^2}{\rho_i^2}\, v_i
         + s_i \frac{r\, d_{L,i}}{\rho_i^2}\, R\,v_i ,
\label{eq:radiusvec}
\end{equation}
which satisfies $\lVert u_i \rVert = r$ by construction, since $\lVert a v + b R v \rVert = \rho_i \sqrt{a^2+b^2}$ and $r^4 + r^2 d_{L,i}^2 = r^2 \rho_i^2$. The arc ends at the centerline tangency point, where the radius vector is $-s_i r \hat y$, so the swept angle is the angle between the two radius vectors,
\begin{equation}
\theta_i(d_i) = \operatorname{atan2}
   \bigl(\lvert u_{i,x} \rvert,\; -s_i\, u_{i,y}\bigr)
   \;\in\; [0, \pi],
\label{eq:arcangle}
\end{equation}
and the total track distance from TCP entry to the FAF is,
\begin{equation}
D_i(d_i) = d_{L,i}(d_i) + r\,\theta_i(d_i) + d_i .
\label{eq:Dtot}
\end{equation}
\Cref{eq:turncenter}--\Cref{eq:Dtot} give $D_i$ as an explicit smooth function of the single scalar $d_i$, with the entry coordinates and the turn direction as parameters. The absolute value in \Cref{eq:arcangle} selects the arc the aircraft actually flies for either entry side.

Two properties of this map matter downstream. The first is monotonicity, which \Cref{subsec:properties} uses to make the per-aircraft commitment exactly solvable. \Cref{lem:monotone_D}, stated in \ref{app:lemmas}, shows that $D_i$ is continuously differentiable and strictly increasing on $[0, d_{\max}]$, with derivative between $0.28$ and $1.98$ for every flow geometry of the two case studies. The lower bound makes the map invertible, and the upper bound of about two makes the extension an expensive control. An aircraft entering with a large arc angle spends up to two track miles for each mile of extension, because the extension lengthens the tangent leg as well as the final segment. \Cref{subsec:case2_flows} shows that this factor, which differs by more than a factor of two across the six published flows, makes delay authority unequal between them.

The second property is that no segment speed appears in \Cref{eq:turncenter}--\Cref{eq:Dtot}. The earlier lateral framework \cite{pang2026trajectory} commanded a speed on each segment. Here, the vertical descent design of \Cref{subsec:vertical} generates the along-track speed history, and $D_i(d_i)$ enters only as available track distance.

\subsection{Wind Uncertainty Modeling}\label{subsec:wind}

We draw a scalar along-track wind component $w$ at the 10{,}000 ft anchor altitude from a truncated normal climatology,
\begin{equation}
w \sim \mathrm{TN}~\!\bigl(0,\sigma_w^2;\,[-\bar w, \bar w]\bigr),
\label{eq:wind_tn}
\end{equation}
with $\sigma_w = 10$~kt in this work. The component materializes over altitude through the power-law profile,
\begin{equation}
W(h) = w\left(\frac{h - h_{\mathrm{rwy}}}
{h_{\mathrm{ref}} - h_{\mathrm{rwy}}}\right)^{1/7},
\label{eq:wind_profile}
\end{equation}
which is anchored at $h_{\mathrm{ref}}=10{,}000$~ft and vanishes at the runway elevation $h_{\mathrm{rwy}}$. We discretize the climatology on $N_w$ evenly spaced nodes $\{w_n\}$ with probability-proportional weights $\{\omega_n\}$, which gives a deterministic quadrature with common random numbers, and each aircraft draws its node independently. For the lateral ground speed, we project the scenario wind vector onto the tangent-leg heading as in \cite[Sec.~2.3]{pang2026trajectory} and \cite{ren2007modeling}. The projection yields the signed along-leg component $w_{L,i}$ and the ground speed $V^{\mathrm{gs}}_i = V^{\mathrm{TAS}}_{\gamma_i} + w_{L,i}$ for level-cruise segments.

Each aircraft observes its wind node $w_i$ at TRACON entry, before the scheduler commits its trajectory. The flight management system then plans the descent with the observed wind entered (\Cref{subsec:oracle}). Conditional on $w_i$, the commitment problem is deterministic. The formulation is therefore a two-stage stochastic program with complete recourse, and its exact solution factorizes into per-node deterministic problems. The reported fleet metrics are quadrature expectations over the climatology \Cref{eq:wind_tn}. This setting is the wind-observable limit of the static procedure-design problem, in which a single descent design is committed against the climatology and stabilized approach is guaranteed only in probability. In the free-descent environment of Case Study~1, no published crossing restriction pins the profile. The wind-aware plan re-anchors top of descent and deceleration segments for the observed wind, and every design in the lattice met stabilization at every wind-envelope node (\Cref{subsec:case1_menus}). The chance constraint regains probabilistic content in the constrained environment of Case Study~2.

\subsection{Descent Architectures and the Vertical Design Space}\label{subsec:vertical}

Each arrival flies one of two idle-thrust descent architectures $a \in \{\mathrm{CDA}, \mathrm{DDA}\}$ from the common entry gate (10{,}000~ft, 240~KCAS, clean) to the runway:
\begin{itemize}[leftmargin=1.4em,itemsep=2pt,topsep=2pt]
\item CDA \cite{clarke2004continuous,clarke2013optimized}: a standard $\gamma_{\mathrm{GS}} = 3.0^\circ$ final glideslope with the landing configuration completed before glideslope capture. The aircraft absorbs the flap ladder on shallow minimum-rate-of-descent deceleration segments during the descent \cite{park2015optimal}.
\item DDA \cite{thomas2021modeling,reynolds2016delayed}: the delayed-deceleration architecture on the same $\gamma_{\mathrm{GS}} = 3.0^\circ$ final, in the continuous-descent parameterization of \cite{pang2026optimal}. The aircraft remains clean at the 240 kt hold CAS deep into the arrival, captures the final at the landing-flap placard speed $V_{\mathrm{cap}}^{\mathrm{DDA}}$, and deploys the landing flap on final.
\end{itemize}

Both architectures fly the same operationally standard $3.0^\circ$ final and share the same capture-distance grid, so the deceleration schedule is the only difference between them. A steeper final would mix a glide-path-angle effect into the comparison. The design vector of aircraft $i$ of type $\gamma_i$ is,
\begin{equation}
\chi_i = \bigl(d_{\mathrm{cap},i},\, x_{i,1},\dots,x_{i,G_{\gamma_i}}\bigr)
\in \mc X^{a}_{\gamma_i},
\label{eq:design_vector}
\end{equation}
where $d_{\mathrm{cap},i}$ is the glideslope-capture distance from the threshold, with platform altitude,
\begin{equation}
h_{\mathrm{cap}} = h_{\mathrm{rwy}} + d_{\mathrm{cap}}
\tan\gamma_{\mathrm{GS}},
\label{eq:hcap}
\end{equation}
and $x_{i,g}$ are the flap-deployment trigger offsets. We partition the flap detents $1,\dots,J_\gamma$ into groups $g$ with identical placard windows $(V_{\min,g}, V_{\max,g})$, midpoint $\mu_g$, and half-width $h_g$. The commanded trigger speed of each group is,
\begin{equation}
V_g = \mathrm{round}\bigl(\mathrm{clip}(\mu_g + x_g;\,
V_{\min,g}, V_{\max,g})\bigr), \qquad |x_g| \le h_g,
\label{eq:trigger_map}
\end{equation}
quantized to the 1 kt command grid and capped 2~kt below the clean hold CAS. A monotone min-cascade $v_j = \min(V_{g(j)}, v_{j-1})$ then sets the per-detent speeds. The approach speed $V_{\mathrm{app}} = V_{\mathrm{ref}} + 5$~kt is pinned at the FAF and held to the threshold for both architectures, so the deceleration completes before the FAF and the aircraft flies the final segment at a steady speed.

A committed trajectory must satisfy four stabilized-approach criteria, encoded in the indicator $s(\chi, w)\in\{0,1\}$ evaluated on the simulated trajectory. The landing flap must be set at the 1{,}000 ft AGL gate, the CAS at that gate must not exceed $V_{\mathrm{ref}}+10$~kt, the CAS at the threshold must be at least $V_{\mathrm{ref}}-10$~kt, and the peak longitudinal deceleration must not exceed $0.12\,g$.

\subsection{Simulation-in-the-Loop Vertical Trajectory Optimization}\label{subsec:oracle}

For a candidate $(\gamma, a, \chi, w)$ this step proceeds in two steps. The first step is a wind-aware backward calculation that constructs the idle-thrust vertical plan from the threshold upstream. The final glideslope and the FAF crossing follow from $\gamma_{\mathrm{GS}}$ and \eqref{eq:hcap}. Each deceleration segment integrates the idle-thrust point-mass dynamics,
\begin{equation}
\frac{dV_T}{dt} = \frac{g\,(T_{\mathrm{idle}}-D_{\mathrm{drag}})}
{W_{\mathrm{ac}}} - g\sin\gamma, \qquad
\gamma = \max\!\Bigl(-\arcsin\frac{\dot h_{\min}}{V_T},\,
\gamma_{\min}\Bigr),
\label{eq:backcalc_dyn}
\end{equation}
with the distance kinematics corrected for the observed wind,
\begin{equation}
\frac{dD}{dt} = V_T\cos\gamma - W(h),
\label{eq:backcalc_wind}
\end{equation}
so the plan allocates deceleration distance in a manner consistent with the wind the aircraft will fly in, as a flight management system with entered descent winds would. The clean idle segment then continues to the entry gate. The along-track length of this plan is the minimum descent track distance $\ul D_\gamma^{a}(\chi, w)$, which is a lower bound on the lateral path length,
\begin{equation}
D_i(d_i) \;\ge\; \ul D_{\gamma_i}^{a}(\chi_i, w_i).
\label{eq:dmin_floor}
\end{equation}
The aircraft flies any surplus $\Delta_i = D_i(d_i) - \ul D_{\gamma_i}^{a}(\chi_i,w_i) \ge 0$ level at the entry altitude upstream of the top of descent.

The second step flies the plan forward in the TASAT 6DOF simulation \cite{nagle2009tasat,ren2007separation} under the wind profile \eqref{eq:wind_profile}. The run yields the descent time to the FAF $t^{\mathrm{des}}$, the fuel burn from the entry gate to the threshold $F^{\mathrm{des}}$, and the stabilization indicator $s$. The level-cruise calibration $(K_\gamma, V^{\mathrm{TAS}}_\gamma)$ gives the fuel per unit distance and the true airspeed at the entry gate. The FAF arrival time and the total fuel of aircraft $i$ under design $\chi_i$ and extension $d_i$ then follow in closed form,
\begin{align}
t_i(\chi_i, d_i; w_i) &= \tau_i + t^{\mathrm{des}}_i
+ \frac{3600\,\Delta_i}{V^{\mathrm{gs}}_i},
\label{eq:time_link}\\
F_i(\chi_i, d_i; w_i) &= F^{\mathrm{des}}_i
+ K_{\gamma_i}\,\frac{V^{\mathrm{TAS}}_{\gamma_i}}
{V^{\mathrm{gs}}_i}\,\Delta_i .
\label{eq:fuel_link}
\end{align}
Level flight at the entry altitude is the most expensive way to cover distance. Fuel-optimal commitments therefore absorb track miles inside the descent whenever a design with a larger $\ul D$ and earlier configuration is available. The extension $d_i$ becomes part of the descent path as a consequence of optimality. Every evaluation is cached, so the fleet layer performs only table lookups.

\subsection{Fleet Co-Optimization Problem and Solution}\label{subsec:sequencing}

For a fixed architecture $a$ and the observed arrival set $\mc I$, the terminal-area co-optimization selects the landing order and the per-aircraft trajectory. We state it as the mathematical program,
\begin{subequations}\label{eq:fleet_problem}
\begin{align}
J^\star = \!\!\min_{\Pi,\,\{\chi_i, d_i\},\,\{\sigma_k\}}\ \
  & W_{\mathrm{safe}}\!\sum_{k=2}^{N}\!\sigma_k
    \;+\; W_{\mathrm{fuel}}\!\sum_{i\in\mc I}\!F_i(\chi_i, d_i;\, w_i)
    \;+\; W_{\mathrm{delay}}\!\sum_{i\in\mc I}\! t_i
    \label{eq:fp_obj} \\
\text{s.t.}\ \
  & t_{\Pi(k)} \ge t_{\Pi(k-1)} + T^{\,k}_{\mathrm{sep}} - \sigma_k,
    \quad k = 2,\dots,N, \label{eq:fp_sep} \\
  & s(\chi_i, w_i) = 1, \quad i \in \mc I, \label{eq:fp_stab} \\
  & D_i(d_i) \ge \ul D^{a}_{\gamma_i}(\chi_i, w_i),
    \quad i \in \mc I, \label{eq:fp_floor} \\
  & \chi_i \in \mc X^{a}_{\gamma_i},\ \ d_i \in [0, d_{\max}],
    \quad i \in \mc I, \label{eq:fp_box} \\
  & \sigma_k \ge 0, \quad k = 2,\dots,N, \label{eq:fp_slack} \\
  & \Pi \in \operatorname{Perm}(\mc I), \label{eq:fp_perm}
\end{align}
\end{subequations}
in which $N = |\mc I|$ is the number of arrivals, $t_i \equiv t_i(\chi_i, d_i; w_i)$ is the FAF arrival time \eqref{eq:time_link}, and $T^{\,k}_{\mathrm{sep}}$ is the required separation of \eqref{eq:sep_eff}.

The decision variables are the landing order $\Pi$, a permutation of $\mc I$, the vertical design $\chi_i = (d_{\mathrm{cap},i}, x_{i,1}, \dots)$ of each aircraft, drawn from the finite lattice $\mc X^{a}_{\gamma_i}$ of \Cref{subsec:vertical}, and the base-leg extension $d_i$. As in \cite{pang2026trajectory}, for each landing rank $k = 2,\dots,N$ we introduce a soft separation slack variable $\sigma_k \ge 0$, which permits a separation violation at the cost of a large penalty $W_{\mathrm{safe}}$ in the objective. The architecture $a$ is fixed per experiment (a homogeneous fleet) and is not optimized.

The objective \eqref{eq:fp_obj} follows the weighted form of \cite{pang2026trajectory}. The penalty hierarchy $W_{\mathrm{safe}} \gg W_{\mathrm{fuel}} \gg W_{\mathrm{delay}} > 0$ encodes the operational priorities of safety first, then fuel, followed by a mild earliest-arrival regularizer. Because $W_{\mathrm{safe}}$ is large enough that no attainable fuel saving can pay for a separation shortfall, the optimizer drives the total slack to its minimum before it minimizes fleet fuel. The scheduler of \Cref{alg:commit} resolves this hierarchy exactly, in strict priority order, so we never tune numerical weights. Relative to \cite{pang2026trajectory}, the single fuel term replaces the surrogate throughput, path-stretch, and speed terms, because delay and path stretch reach this problem only through the fuel they cost \eqref{eq:fuel_link}. The delay term that remains is a regularizer. It breaks ties among fuel-equal designs in favor of the earliest arrival, which protects the slots of later aircraft.

The constraints impose soft wake/runway separation at the FAF \eqref{eq:fp_sep}, hard stabilized-approach feasibility at the observed wind node \eqref{eq:fp_stab}, the descent-feasibility floor \eqref{eq:fp_floor}, and the lattice membership and box bounds \eqref{eq:fp_box}. The slack $\sigma_k$ records saturation of available delay authority instead of making the instance infeasible. The floor constraint repeats \eqref{eq:dmin_floor} and couples vertical design to lateral path length. There is no holding variable. All delay is absorbed by descent design and extension, and any residual shortfall appears as $\sigma_k$.

We solve the program by rolling-horizon commitment. Program \eqref{eq:fleet_problem} couples the fleet only through the shared order $\Pi$ in the separation constraints \eqref{eq:fp_sep}. We fix $\Pi$ with a first-come-first-served rule and then commit the per-aircraft variables in a single online forward pass that touches each aircraft once, as in \cite{pang2026trajectory}. FOFFS sorts aircraft by the nominal estimated time of arrival at the FAF, evaluated at zero extension on the type's reference design $\bar\chi_\gamma$ (midpoint triggers at the base capture distance) under the observed wind,
\begin{equation}
\mathrm{ETA}^{\mathrm{nom}}_i = t_i(\bar\chi_{\gamma_i}, d_i=0;\, w_i).
\label{eq:foffs_eta}
\end{equation}
We also evaluate first-entry-first-serve (FEFS) ordering, which sorts aircraft by TCP entry time and is the fairness reference of \cite{pang2026trajectory}. The commitment step is identical under either order.

When aircraft $i$ reaches slot $k$, the committed predecessor fixes the required slot time $t^{\mathrm{req}}_i = t_{\Pi(k-1)} + T^{\,k}_{\mathrm{sep}}$, and aircraft $i$ solves the per-aircraft restriction of \eqref{eq:fleet_problem} under the fixed order,
\begin{equation}
\begin{aligned}
\min_{\chi \in \mc X^{a}_{\gamma_i},\; d \in [0, d_{\max}]}\quad
& F_i(\chi, d;\, w_i)\\
\text{s.t.}\quad
& t_i(\chi, d;\, w_i) \ge t^{\mathrm{req}}_i, \quad s(\chi, w_i) = 1,\\
& D_i(d) \ge \ul D^{a}_{\gamma_i}(\chi, w_i),
\end{aligned}
\label{eq:joint_commit}
\end{equation}
and records the rank-$k$ slack $\sigma_k = \max\{0,\, t^{\mathrm{req}}_i - t_i(\chi, d_{\max})\}$ when even the slowest stabilized design at $d = d_{\max}$ cannot meet $t^{\mathrm{req}}_i$ (we perform no holding at the TRACON boundary, and residual delay belongs to upstream Center metering, which lies outside the present scope). \Cref{subsec:properties} shows that \eqref{eq:joint_commit} is solved to optimality by enumerating the cached lattice and bisecting, for each design, for the smallest feasible extension. The scheduler first uses descent retardation through a slower, earlier-configuring design. It adds level path stretch only when the design menu cannot absorb the required delay, and it breaks fuel ties by earliest arrival to protect later slots. \Cref{alg:commit} summarizes the forward pass.

\begin{algorithm}[t]
\caption{FOFFS forward pass with joint (design, extension) commitment}
\label{alg:commit}
\begin{algorithmic}[1]
\Require aircraft set $\mc I$ with $(\tau_i, C_i, \gamma_i, w_i)$;
cached evaluations $(\ul D, t^{\mathrm{des}}, F^{\mathrm{des}}, s)$;
wake matrix
\Ensure committed $\{\chi_i, d_i, t_i\}_{i\in\mc I}$,
$\{\sigma_k\}_{k=2}^{N}$
\State sort $\mc I$ by $\mathrm{ETA}^{\mathrm{nom}}_i$
\eqref{eq:foffs_eta} \Comment{FOFFS order $\Pi$}
\State $t_{\mathrm{prev}} \gets -\infty$
\For{$k = 1,\dots,N$;\quad $i = \Pi(k)$}
  \State $t^{\mathrm{req}}_i \gets t_{\mathrm{prev}} +
  T^{\,k}_{\mathrm{sep}}$
  \For{each stabilized design $\chi \in \mc X^{a}_{\gamma_i}$ at $w_i$}
    \State $d_{\mathrm{floor}} \gets \min\{d : D_i(d) \ge
    \ul D^{a}_{\gamma_i}(\chi, w_i)\}$ \Comment{bisection}
    \State $d_\chi \gets \min\{d \ge d_{\mathrm{floor}} :
    t_i(\chi,d) \ge t^{\mathrm{req}}_i\}$;\quad
    $\sigma_\chi \gets 0$
    \If{no such $d$}\ $d_\chi \gets d_{\max}$;\
    $\sigma_\chi \gets t^{\mathrm{req}}_i - t_i(\chi, d_{\max})$
    \EndIf
  \EndFor
  \State commit $(\chi_i, d_i) \gets
  \arg\min_\chi \bigl(W_{\mathrm{safe}}\sigma_\chi
  + W_{\mathrm{fuel}} F_i(\chi, d_\chi)
  + W_{\mathrm{delay}}\, t_i(\chi, d_\chi)\bigr)$
  \Comment{objective \eqref{eq:fp_obj}}
  \State $t_{\mathrm{prev}} \gets t_i$
\EndFor
\end{algorithmic}
\end{algorithm}

\subsection{Structural Properties of the Commitment Problem}\label{subsec:properties}

The per-aircraft restriction \eqref{eq:joint_commit} is mixed, with a finite design lattice and a continuous extension. The enumerate-and-bisect step of \Cref{alg:commit} solves it to optimality. This subsection states the properties behind that result, the weight condition that reproduces the operational priority order, and the interpretation of the reported expectations. Throughout, fix an aircraft $i$ of type $\gamma$, an architecture $a$, and an observed wind node $w_i$. The cached evaluation returns $\ul D = \ul D^{a}_\gamma(\chi, w_i)$, $t^{\mathrm{des}} = t^{\mathrm{des}}(\chi, w_i)$, $F^{\mathrm{des}} = F^{\mathrm{des}}(\chi, w_i)$ and $s(\chi, w_i)$ for each design $\chi$ in the lattice $\mc X^{a}_\gamma$.

\Cref{lem:monotone_tf}, stated and proved in \ref{app:lemmas}, transfers the monotonicity of \Cref{lem:monotone_D} to the decision quantities. For a fixed design $\chi$, the FAF arrival time \eqref{eq:time_link} and the total fuel \eqref{eq:fuel_link} are continuous and strictly increasing in the extension $d$ on the feasible interval $[d_{\mathrm{floor}}(\chi), d_{\max}]$, and their derivatives satisfy $\partial F_i/\partial d = (K_\gamma V^{\mathrm{TAS}}_\gamma/3600)\,\partial t_i/\partial d$. The proportionality in \Cref{lem:monotone_tf} is the analytic form of the statement that level flight at the entry altitude is the only way the extension enters the objective. The extension buys time at a fixed fuel price per second, whereas switching to a design with a larger $\ul D$ buys time at whatever price the descent evaluation reports. The commitment compares those two prices, and it has a closed form.

\Cref{prop:exact}, stated and proved in \ref{app:props}, turns that comparison into an exact algorithm. For the stabilized menu $\mc S = \{\chi \in \mc X^{a}_{\gamma} : s(\chi, w_i) = 1\}$, each design $\chi \in \mc S$ admits a unique smallest feasible extension $d_\chi$, computed by bisection on a strictly increasing continuous function, and the optimum of \eqref{eq:joint_commit} is attained at $\arg\min_{\chi \in \mc S} F_i(\chi, d_\chi)$. Enumerating at most $|\mc X^{a}_\gamma|$ designs and bisecting twice per design therefore solves the per-aircraft commitment exactly. The step that could fail is the stabilization filter. If $\mc S = \emptyset$ for some $(\gamma, a, w_i)$, the aircraft has no committable trajectory and \eqref{eq:joint_commit} is infeasible. \Cref{subsec:case1_menus,subsec:case2_designspace} report that $\mc S \ne \emptyset$ at every cell of the wind quadrature in both case studies, so the hard constraint \eqref{eq:fp_stab} never emptied the feasible set and no chance-constrained relaxation was needed.

\Cref{prop:lex}, also in \ref{app:props}, makes the priority hierarchy of \eqref{eq:fp_obj} precise. Whenever the weights satisfy the separation condition \eqref{eq:weightcond}, the minimizer of the weighted per-aircraft objective in \Cref{alg:commit} coincides with the lexicographic minimizer of the triple of slack, fuel, and arrival time. This removes weight tuning from the reported results. \Cref{alg:commit} implements the lexicographic comparison directly, which is equivalent to solving \eqref{eq:fleet_problem} with any weights satisfying \eqref{eq:weightcond}, and the delay term acts only as a tie-break among fuel-equal designs, in favor of the earliest arrival.

Finally, \Cref{prop:ws} in \ref{app:props} fixes what the reported expectations mean. Every decision variable of \eqref{eq:fleet_problem} is chosen after the corresponding wind node is observed, so the reported fleet metrics estimate the wait-and-see value $\mathbb E_{\boldsymbol w}[J^\star(\boldsymbol w)]$ of the stochastic program, where $\boldsymbol w = (w_1,\dots,w_N)$ collects the independently drawn per-aircraft wind nodes and $J^\star(\boldsymbol w)$ is the optimal value of \eqref{eq:fleet_problem} at that realization. The bound \eqref{eq:wsbound} then places this value below the expected cost of any wind-blind design rule. Two consequences follow. First, the savings reported in \Cref{sec:case1,sec:case2} are the favorable limit of the procedure-design problem. A single published design committed against the climatology cannot beat them, so the values are upper bounds for a wind-blind procedure. Second, the per-aircraft descent evaluation depends on $w_i$ alone, not on $\boldsymbol w$. The cache is indexed by $(\gamma, a, \chi, w)$ and has size $|\Gamma|\,|A|\,|\mc X|\,N_w$, rather than exponential size in the fleet. With four airframes, two architectures, a 16 point lattice and 11 wind nodes this is $1{,}408$ simulated trajectories, and every scenario of every demand level in both case studies reads from that single table.

One forward pass of \Cref{alg:commit} costs $O\bigl(N\,|\mc X|\,\log(1/\epsilon)\bigr)$ cache lookups, where $\epsilon = 10^{-4}$~nmi is the bisection tolerance, since each of the $N$ aircraft enumerates at most $|\mc X| = 16$ designs and performs two bisections per design. No trajectory is simulated online, and no aircraft is revisited once committed. The Phase-1 re-sequencing of \Cref{subsec:cps} dominates the cost at $k = 3$, and we state its complexity there.

\subsection{FOFFS with Constrained Position Shifting}\label{subsec:cps}

FOFFS is a heuristic order. With a heterogeneous fleet, re-pairing nearby aircraft can avoid the large wake penalties behind Heavy leaders \cite{balakrishnan2006scheduling}. We also evaluate FOFFS-CPS$_k$, in which the landing rank of every aircraft may deviate from its FOFFS rank by at most $k$ positions, $|r^\star(j) - r^{\mathrm{FOFFS}}(j)| \le k$. We consider $k \in \{1,2,3\}$, the range recommended in the CPS literature \cite{balakrishnan2006scheduling,beasley2000scheduling,dear1976dynamic}. As in \cite{pang2026trajectory} the policy has two phases. Phase 1 chooses the order, and Phase 2 executes \Cref{alg:commit} under that fixed order, so \Cref{prop:exact,prop:lex} apply unchanged to the per-aircraft commitment. For $k = 0$ the policy reduces to FOFFS.

We solve Phase 1 with the window dynamic program of Balakrishnan and Chandran \cite{balakrishnan2006scheduling} in place of the MILP of \cite{pang2026trajectory}. The program composes landing times on the per-aircraft nominal arrival-time envelopes $[E_j, L_j]$ that the cached evaluations give. The lower end $E_j$ comes from the fastest stabilized design at its feasibility floor, and the upper end $L_j$ from the slowest design at $d_{\max}$. Times clamp at $L_j$ and the shortfall accumulates as slack, which mirrors the Phase-2 semantics. Its objective is the Phase-1 restriction $W_{\mathrm{safe}}\sum_j \sigma_j + W_{\mathrm{delay}}\sum_j t_j$ of \eqref{eq:fp_obj}, in which the fuel term is inactive because the trajectory variables are not yet assigned.

The state after filling landing position $p$ is the pair consisting of the occupancy pattern of the $(2k{+}1)$-wide window of FOFFS ranks $\{p-k,\dots,p+k\}$ and the identity of the last-placed aircraft, which the wake matrix needs. The position-shift bound restricts the successors of each state to the $2k+1$ ranks inside the window, so the reachable state count is,
\begin{equation}
|\mc Q| \;=\; \mathcal{O}\bigl(N\, 2^{2k+1}\,(2k+1)\bigr),
\label{eq:cps_states}
\end{equation}
which is linear in the fleet size for fixed $k$ and keeps the program tractable where an unrestricted permutation search is not. The value attached to each state is not a scalar. Because slack and time trade against each other, we carry the Pareto frontier of the triple (accumulated slack, accumulated time, last landing time), and a successor is discarded when an incumbent dominates it in all three components.

Carrying the full frontier is exact but its width grows with the fleet size, so we cap it at the $24$ lexicographically best incomparable items per state, which bounds the pass at $\mathcal{O}\bigl(N\, 2^{2k+1} (2k+1)^2\bigr)$ frontier operations. The cap is active at the fleet sizes studied here, so Phase 1 is a bounded-width approximation rather than an exact solve. On synthetic instances of $30$ to $61$ aircraft the capped program returns the same order as the uncapped program in about three quarters of cases, and when the orders differ the capped one gives up at most $54$~s of total slack. This approximation is conservative for the reported CPS benefit. The reported CPS arms understate the achievable slack reduction, so the gap between FOFFS and FOFFS-CPS$_k$ in \Cref{sec:case1,sec:case2} is a lower bound on what exact re-sequencing would deliver.

\section{Case Study 1: Free-Descent Trajectory-Based Operations}\label{sec:case1}

Case Study~1 evaluates the co-optimization in an idealized future-operations environment. No published transition-leg crossing restrictions apply, so aircraft performance, placard schedules, and the stabilization criteria are the only constraints on the descent. The setting isolates the value of full 4D trajectory freedom, and its results bound the procedure-constrained Case Study~2 from above.

\subsection{Experimental Setup}\label{subsec:case1_setup}

For each type and architecture, the capture-distance grid is $d_{\mathrm{cap}} \in \{10.0, 11.5, 12.48\}$ nmi. The 5,000 ft platform lies at 12.48 nmi on the $3.0^\circ$ glideslope, and the 10.0 nmi capture sits at the edge of the standard ILS glideslope service volume. Both architectures share this grid, because the common $3.0^\circ$ final gives them the same capture geometry. We cross these capture distances with five common normalized trigger offsets $x_g = \alpha h_g$, $\alpha \in \{-1,-0.5,0,0.5,1\}$, and we add the production $V_{\min}{+}10$ trigger rule at the base capture. This rule is current practice, and we use it as the Baseline throughout the paper. We discretize the wind climatology \eqref{eq:wind_tn} on $N_w = 5$ nodes over $\pm 20$~kt. The comparison covers the Baseline, which fixes the CDA with $V_{\min}{+}10$ triggers and allows no design choice, the fixed midpoint designs of both architectures, and the co-optimized CDA-FOFFS and DDA-FOFFS arms, which give \Cref{alg:commit} the full stabilized menu. The two co-optimized families also run under FEFS ordering and with CPS$_{1,2,3}$ re-sequencing of the FOFFS order. The Monte Carlo demand sweep sets $\lambda_\kappa = \lambda$ for all corners with $\lambda \in \{1,\dots,15\}$ aircraft per hour per corner, which yields aircraft gate arrival rates between 4 and 60~AC/hr. Each demand level uses 50 seeds with common random numbers across policies.

\subsection{Vertical Design Menus}\label{subsec:case1_menus}

\Cref{fig:menu} shows the design menus available to the scheduler. Each panel plots the $(t^{\mathrm{des}}, F^{\mathrm{des}})$ points of every stabilized design for one airframe at the zero-wind node. The DDA cloud sits below the CDA cloud for every type, because an aircraft that stays clean and fast to the same capture point avoids most of the high-drag configured descent. Both architectures favor the smallest capture distance in the shared grid, 10~nmi. Every lattice design also passed the stabilization check. With the free path and the wind-aware plan, the feasible set never emptied across the $\pm 25$ kt envelope, which supports the argument of \Cref{subsec:wind} that a stabilized-approach chance constraint would be inactive in this environment.

\Cref{tab:champions} quantifies the design-level gains at the zero-wind node. The optimized CDA saves 17--23\% relative to the Baseline, and the optimized DDA saves 24--37\%. At the matched glideslope the DDA advantage over the optimized CDA is 8\% for the B737-800 and B767-400ER and 18\% for the A319 and A340-300, largest for the low approach-speed types that have the widest clean-to-approach speed range to exploit. \Cref{fig:profiles} shows the realized profiles for a Large type (B737-800) and a Heavy type (A340-300). The optimized CDA begins its flap ladder 10--12~nmi from the threshold. The DDA holds the 240 kt clean configuration all the way to the same close-in capture, then deploys the landing flap on the final segment. Of the two types shown, the A340-300 has the wider speed range between the clean hold and the approach speed, and it gains more from the delayed deceleration.

\begin{table}[t]
\centering
\caption{Fuel-optimal stabilized designs at the zero-wind node (fuel from the 10{,}000 ft entry gate to the threshold on the natural path). Savings are relative to the CDA Baseline ($V_{\min}{+}10$ triggers at the base capture distance).}
\label{tab:champions}
\begin{tabular}{lccccc}
\toprule
& \multicolumn{1}{c}{CDA Baseline}
& \multicolumn{2}{c}{Optimized CDA}
& \multicolumn{2}{c}{Optimized DDA}\\
\cmidrule(lr){3-4}\cmidrule(lr){5-6}
Type & $F$ (kg) & $F$ (kg) & saving & $F$ (kg) & saving\\
\midrule
A319       & 135.1 & 103.6 & 23.3\% & 84.8  & 37.2\%\\
B737-800   & 208.5 & 173.3 & 16.9\% & 159.4 & 23.5\%\\
A340-300   & 579.5 & 455.1 & 21.5\% & 371.3 & 35.9\%\\
B767-400ER & 441.8 & 362.7 & 17.9\% & 332.7 & 24.7\%\\
\bottomrule
\end{tabular}
\end{table}

\begin{figure*}[!t]
\centering
\includegraphics[width=\textwidth]{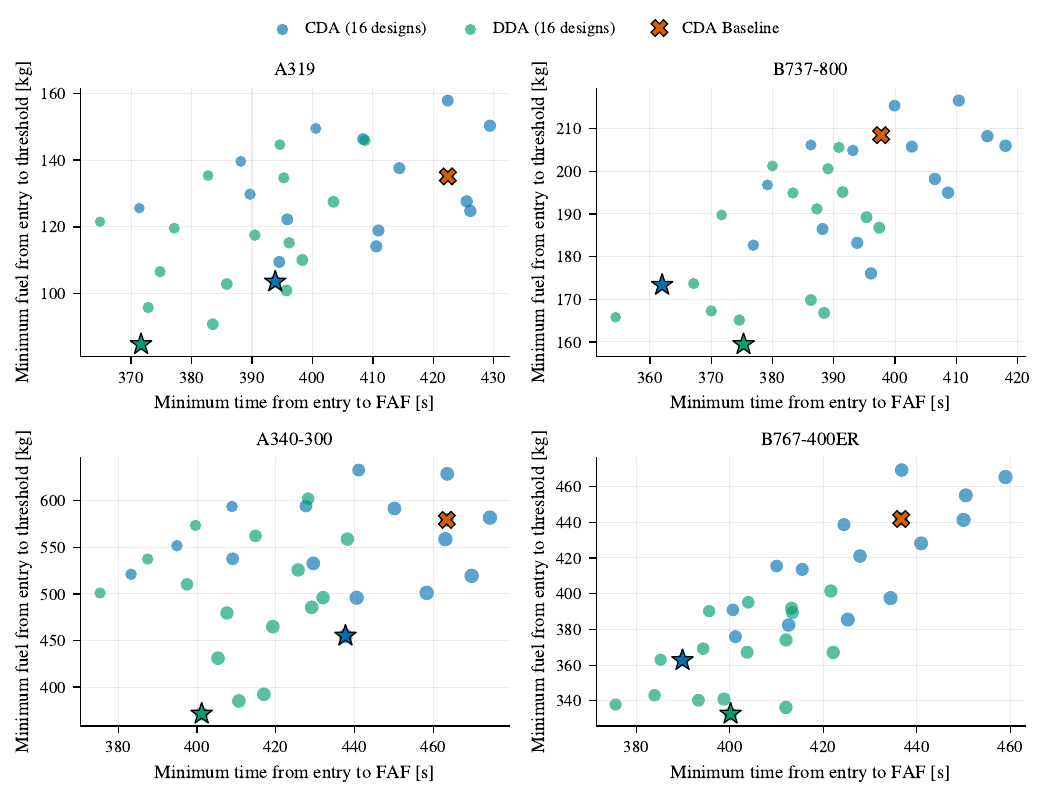}
\caption{Stabilized design menus at the zero-wind node, showing descent time to the FAF versus fuel to the threshold for every lattice design. Marker size scales with the minimum descent track distance $\ul D$, stars mark the fuel-optimal designs, and the cross marks the CDA Baseline. The joint commitment of \Cref{alg:commit} trades along these clouds when absorbing separation delay.}
\label{fig:menu}
\end{figure*}

\begin{figure*}[!t]
\centering
\includegraphics[width=0.9\textwidth]{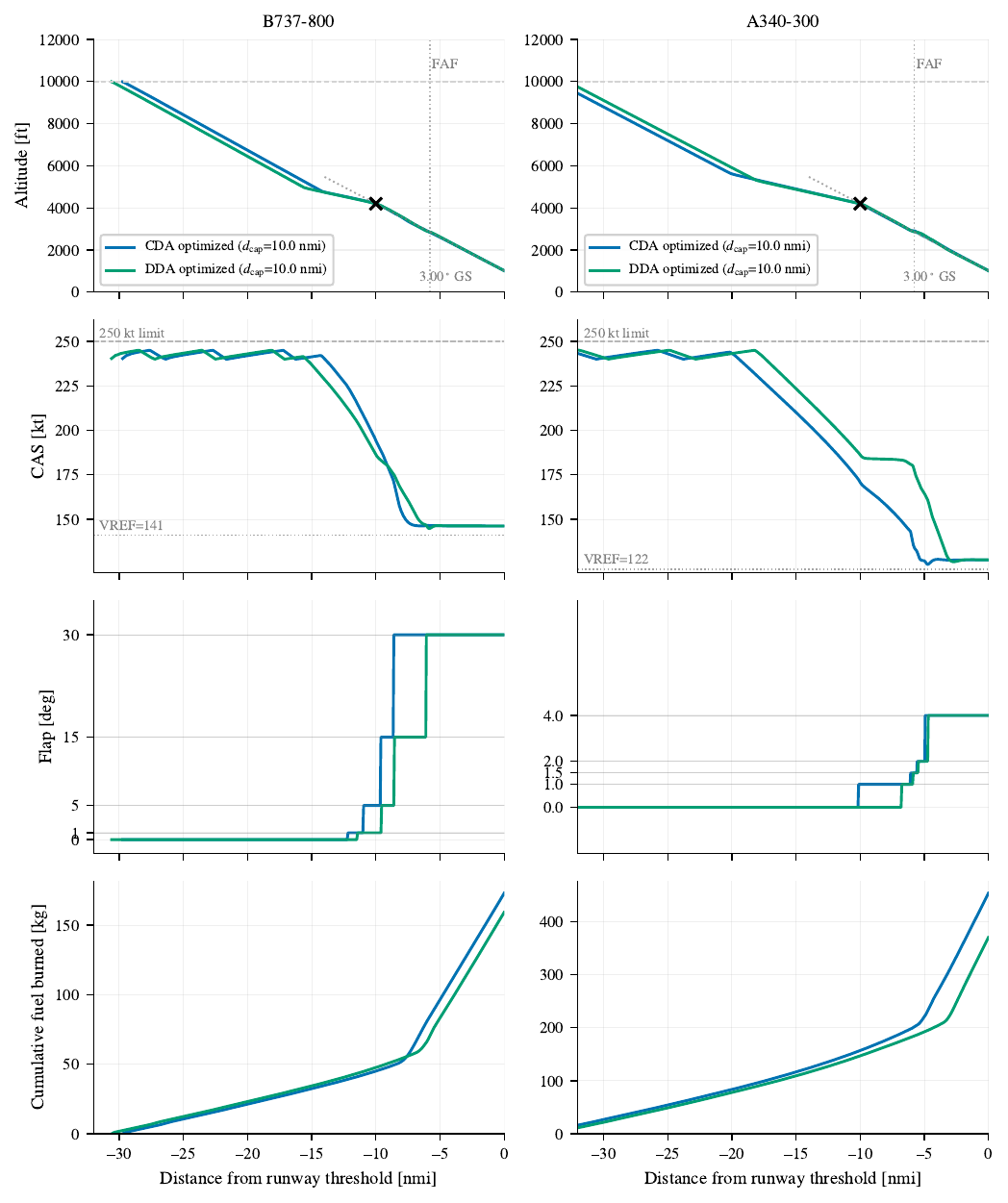}
\caption{Descent profiles at the zero-wind node for a Large type (B737-800, left) and a Heavy type (A340-300, right), simulated on the natural path. The rows show altitude, CAS, flap setting, and cumulative fuel for the fuel-optimal CDA and the fuel-optimal DDA. Dashed lines show the wind-aware plans, and crosses mark the planned glideslope-capture points. On the common $3.0^\circ$ final the DDA stays clean at the 240 kt hold CAS deepest into the arrival, captures at the same 10 nmi distance as the optimized CDA, and deploys the landing flap on the final segment. Flap deployments are shown at the placard detent values.}
\label{fig:profiles}
\end{figure*}

\subsection{Fleet-Level Behavior}\label{subsec:case1_fleet}

\Cref{fig:paths} illustrates committed lateral paths for one uncongested scenario under the Baseline and the co-optimized DDA. \Cref{fig:timespace} shows a congested 60 AC/hr timeline, including wake-limited landing slots, absorbed delay, and delay-authority saturation. Joint commitment differs from a decoupled vertical-then-lateral scheme in two ways. First, the descent-feasibility floor \eqref{eq:dmin_floor} couples the layers. A fast, clean design may require positive extension on short corner geometries, and surplus track miles can make an earlier-configuring design cheaper than level stretch. In these scenarios, 58--65\% of aircraft in the co-optimized families commit their unconstrained fuel-optimal design. The rest trade design against separation slot or surplus distance. Second, the scheduler uses the delay-absorption order of \Cref{subsec:oracle}. It prefers descent retardation, resorts to level stretch when the design menu cannot absorb the delay, and records slack only at saturation.

\begin{figure*}[!t]
\centering
\includegraphics[width=\textwidth]{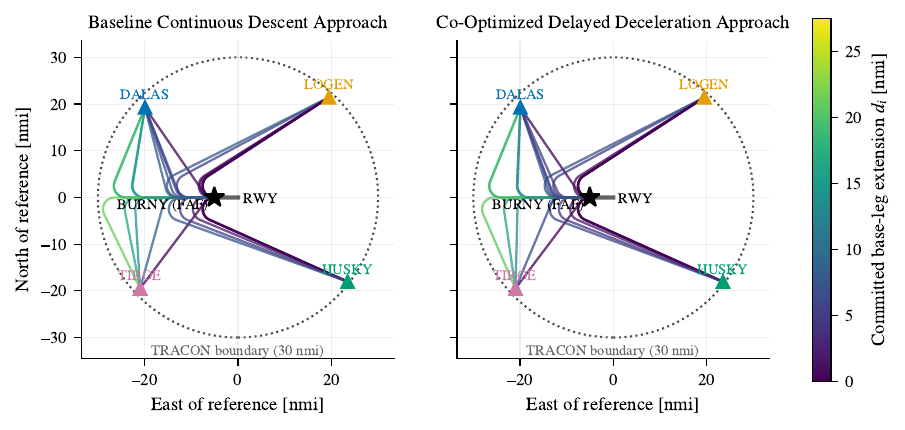}
\caption{Committed lateral paths for one scenario with 27 aircraft arriving from the four corner fixes, under the CDA Baseline (left) and the co-optimized DDA (right). Each path is colored by the committed base-leg extension $d_i$, and the dotted circle marks the 30 nmi TRACON boundary.}
\label{fig:paths}
\end{figure*}

\begin{figure*}[!t]
\centering
\includegraphics[width=\textwidth]{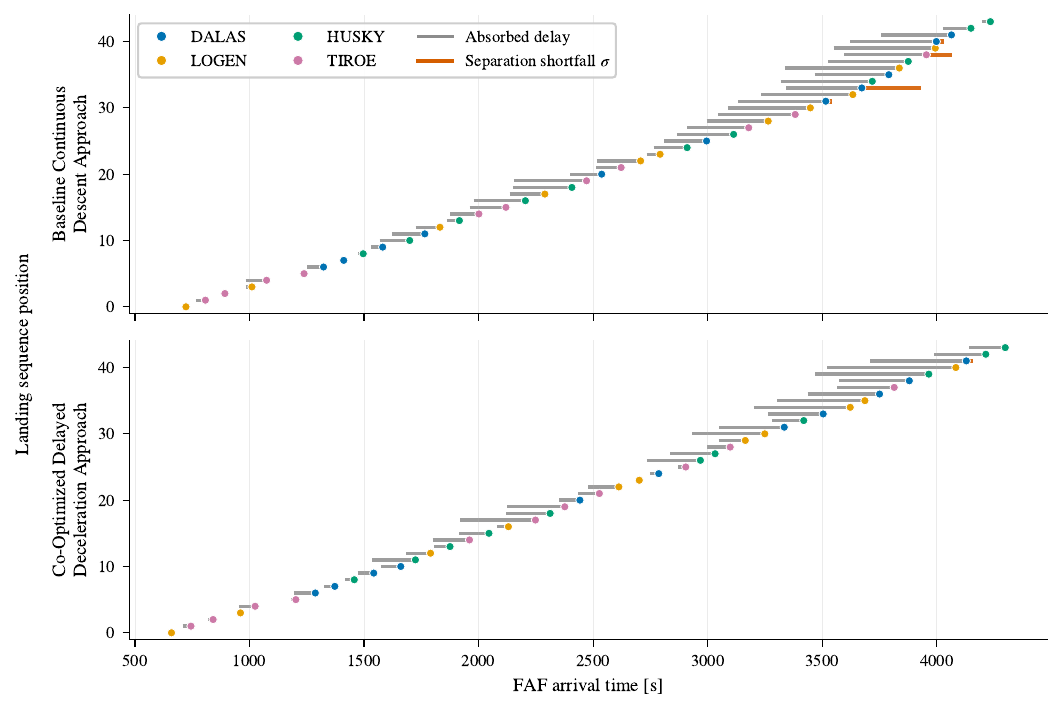}
\caption{FAF landing timeline for a congested 60 AC/hr scenario (44 aircraft, seed 108), comparing current practice (Baseline continuous descent, top) with the co-optimized delayed deceleration under FOFFS-CPS$_2$ (bottom). Each dot is an aircraft at its committed FAF time, colored by its corner fix and stacked by landing sequence position. The gray bar behind each dot is the delay the aircraft absorbs relative to its unconstrained earliest arrival, and a red-orange bar marks any residual separation shortfall $\sigma$. The Baseline saturates the delay authority late in the arrival push and lands four aircraft with shortfalls totaling 392 s, whereas the co-optimized schedule absorbs the same demand with a single 20 s shortfall while burning 22\% less fleet fuel (22{,}300 kg against 28{,}457 kg).}
\label{fig:timespace}
\end{figure*}

\subsection{Effect of Demand on Fuel, Delay, and Saturation}\label{subsec:case1_sweep}

\Cref{fig:sweep_metrics} reports demand sweeps for the CDA and DDA families under FEFS, FOFFS, and CPS$_{1,2,3}$ re-sequencing. The plotted metrics are average fuel per aircraft, average delay, average path stretch $\bar d_i$, and average separation violation versus aircraft gate arrival rate. \Cref{fig:sweep_savings} gives paired fleet-fuel savings relative to the Baseline, and \Cref{tab:sweep} tabulates selected demand levels. The error bars in \Cref{fig:sweep_metrics} pool demand and wind randomness. \Cref{subsec:case1_wind} separates the two contributions.

\begin{table*}[t]
\centering
\caption{Demand-sweep summary (50 seeds per level, common random numbers). The Baseline is current practice (CDA with $V_{\min}{+}10$ triggers). We pair savings per scenario against it and measure both the CDA and DDA families against the same Baseline. $\sigma$ denotes the mean total separation slack per scenario, and $\bar n_{\mathrm{viol}}$ denotes the mean number of violating aircraft. Column headings give the aircraft gate arrival rate in AC/hr.}
\label{tab:sweep}
\begin{tabular}{clccccccc}
\toprule
& & \multicolumn{3}{c}{Fuel saving over Baseline (\%)}
& \multicolumn{2}{c}{$\sigma$ (s)}
& \multicolumn{2}{c}{$\bar n_{\mathrm{viol}}$}\\
\cmidrule(lr){3-5}\cmidrule(lr){6-7}\cmidrule(lr){8-9}
& Policy & 20 & 40 & 60 & 48 & 60 & 48 & 60\\
\midrule
\multicolumn{2}{l}{Baseline (current practice)}
                &  --- &  --- &  --- & 56.6 & 215.1 & 0.88 & 3.34\\
\midrule
\multirow{5}{*}{CDA}
& FEFS          &  8.9 &  5.3 &  8.1 & 102.8 & 335.4 & 1.88 & 5.18\\
& FOFFS         & 15.3 & 13.8 &  8.7 & 46.0 & 189.7 & 0.76 & 3.00\\
& FOFFS-CPS$_1$ & 15.5 & 15.1 & 13.4 & 20.4 & 102.3 & 0.34 & 1.84\\
& FOFFS-CPS$_2$ & 15.5 & 15.5 & 15.2 &  9.3 &  61.3 & 0.26 & 1.64\\
& FOFFS-CPS$_3$ & 15.4 & 15.5 & 15.1 &  6.6 &  47.9 & 0.26 & 1.36\\
\midrule
\multirow{5}{*}{DDA}
& FEFS          & 16.3 & 12.0 & 12.6 & 101.5 & 335.0 & 1.96 & 5.10\\
& FOFFS         & 22.7 & 20.2 & 13.9 & 45.5 & 181.9 & 0.76 & 2.98\\
& FOFFS-CPS$_1$ & 23.2 & 22.3 & 18.8 & 20.3 & 102.3 & 0.38 & 1.76\\
& FOFFS-CPS$_2$ & 23.1 & 22.6 & 20.8 &  8.3 &  58.9 & 0.20 & 1.44\\
& FOFFS-CPS$_3$ & 23.0 & 22.6 & 20.6 &  6.1 &  44.2 & 0.28 & 1.08\\
\bottomrule
\end{tabular}
\end{table*}

Under plain FOFFS the DDA fleet saves about 23\% at low demand and 13.9\% at 60~AC/hr, while the CDA fleet saving falls from about 15.5\% to 8.7\%. Both families lose ground as demand rises. Congestion forces level path stretch on every policy, and that shared fuel cost dilutes the descent-side savings.

FEFS ordering performs worst on both fuel and slack, which matches the lateral study \cite{pang2026trajectory}. Entry order cannot exploit the geometric asymmetry between the corner streams, so FEFS spends far more path stretch at every demand level and saturates earlier. At 60~AC/hr its slack roughly doubles that of FOFFS, at 335--337~s with about 5.1 violating aircraft per scenario, and its paired fuel saving drops to 5--9\% for the CDA family and 12--16\% for the DDA family.

CPS re-sequencing changes little below the saturation onset near 40~AC/hr, where slack first appears. Above the onset, it holds savings nearly flat by recovering slots that FOFFS spends on pairings behind Heavy leaders. At 60~AC/hr with CPS$_3$, the saving is 15.1\% for the CDA and 20.6\% for the DDA. Re-sequencing matters most for safety. At 60~AC/hr, CPS$_3$ reduces mean slack from 182--190~s to 44--48~s and the mean number of violating aircraft from about 3.0 to 1.1--1.4 per scenario. CPS$_2$ already captures most of this gain, consistent with diminishing returns beyond $k=2$ \cite{balakrishnan2006scheduling,pang2026trajectory}.

The midpoint rule burns about 5\% more fuel than the Baseline trigger rule at every demand level. Late deployment appears to be fuel-optimal in this environment. The CDA savings come from optimizing the triggers inside the placard windows, and an arbitrary change to the triggers would not produce them.

\begin{figure*}[!tp]
\centering
\includegraphics[width=0.95\textwidth]{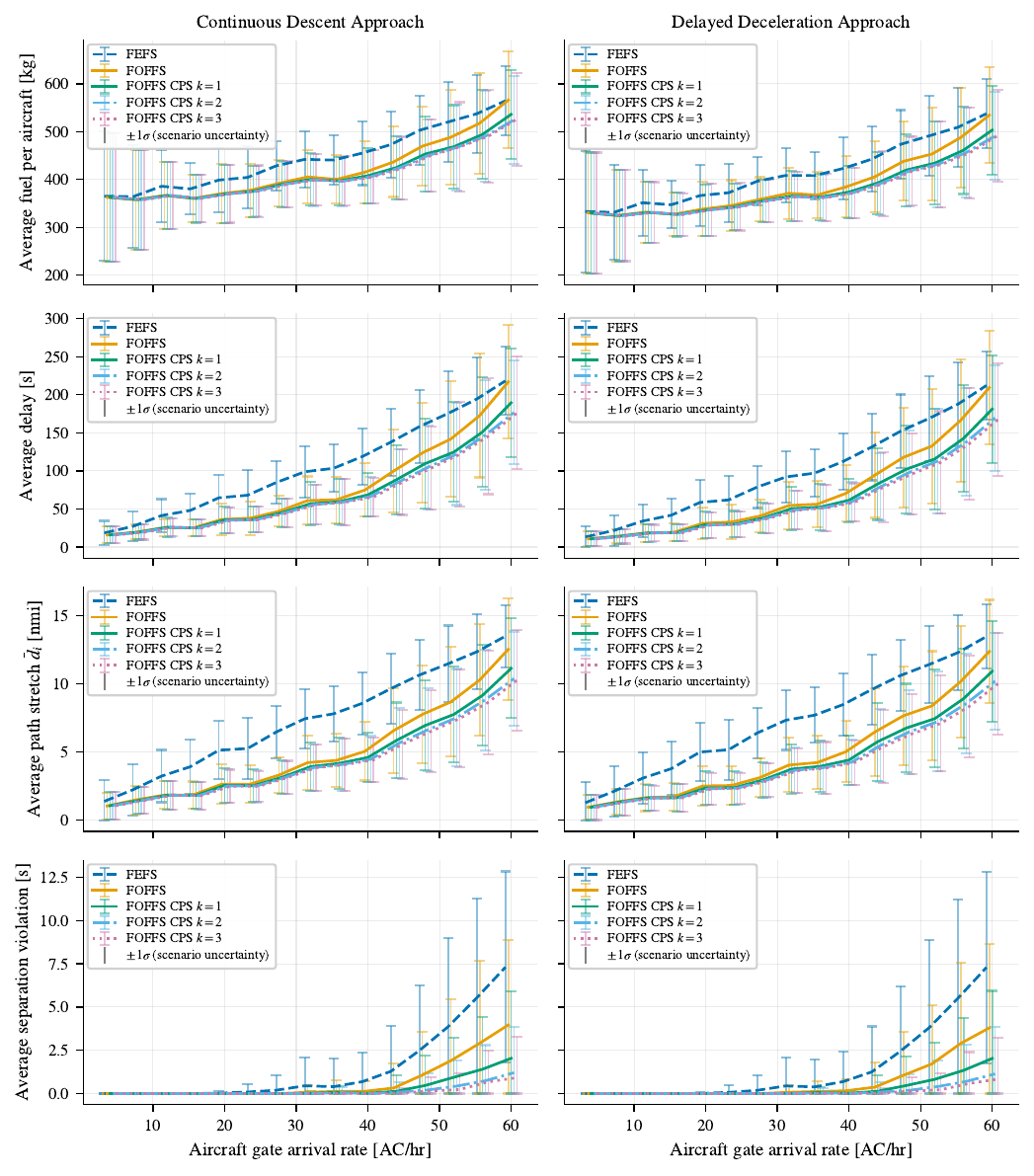}
\caption{Average per-aircraft fuel burn, average delay, average path stretch $\bar d_i$, and average separation violation (rows) versus aircraft gate arrival rate for the CDA family (left column) and the DDA family (right column) under FEFS, FOFFS, and FOFFS-CPS$_{1,2,3}$ (50 seeds per demand level). Within each row the two columns share the vertical axis, so a horizontal read gives the difference between the two architectures directly. Error bars show the $\pm 1\sigma$ scenario (demand and wind) uncertainty, and lower bars are clipped at the physical zero.}
\label{fig:sweep_metrics}
\end{figure*}

\begin{figure*}[!t]
\centering
\includegraphics[width=\textwidth]{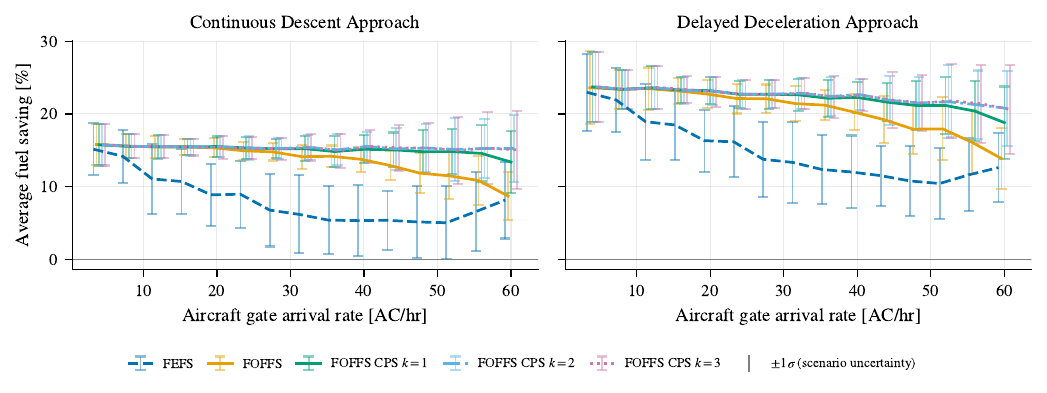}
\caption{Average fleet-fuel saving relative to the CDA Baseline, paired per scenario, versus aircraft gate arrival rate for the CDA family (left) and the DDA family (right) under FEFS, FOFFS, and FOFFS-CPS$_{1,2,3}$ on a shared vertical scale. Error bars show the $\pm 1\sigma$ scenario uncertainty.}
\label{fig:sweep_savings}
\end{figure*}

\subsection{Effect of Wind}\label{subsec:case1_wind}

The metrics of \Cref{fig:sweep_metrics,fig:sweep_savings} are quadrature expectations over the wind climatology, so wind appears in them only through the pooled scenario error bars. This subsection quantifies the wind impact directly, first on the individual descent and then on the fleet metrics.

\Cref{fig:wind_fuel} evaluates the cached design lattice on an 11-node grid spanning $\pm 25$~kt, one node beyond the $\pm 20$ kt climatology envelope. Wind is a first-order effect on the individual descent. Between the 20 kt tailwind node and the 20 kt headwind node the descent fuel of the CDA Baseline rises by 34--81\% depending on airframe (for example 179 to 249~kg for the B737-800 and 450 to 732~kg for the A340-300), because the headwind lowers the ground speed and the wind-aware plan spends longer on the same track. The swing exceeds every policy gap in \Cref{fig:sweep_metrics}. The fuel-optimal design is wind-dependent. The DDA champion re-selects for three of the four airframes across the grid, up to five times for the B767-400ER, so a single wind-blind design would forgo part of the menu benefit. The optimized DDA is also less wind-sensitive than the Baseline. For the A340-300, its tailwind-to-headwind swing is 109~kg, less than half the Baseline swing of 282~kg. The DDA fuel advantage therefore persists across the wind range.

The fleet metrics see a much weaker wind signal. To separate the two randomness sources, we repeat the sweep with a crossed design. Each demand level uses 16 demand realizations and 16 independent redraws of the per-aircraft wind nodes for the FEFS and FOFFS arms of both families. The spread of the scenario mean then splits into a demand component, defined as the standard deviation over demand realizations of the wind-averaged metric, and a wind component, defined as the root-mean-square within-realization standard deviation over wind redraws. The squared components reproduce total variance to within 6\%. For the FOFFS arms, the demand component tracks the total at every demand level, while the wind component stays at 6--18~kg of fuel and 2--10~s of delay per aircraft, which amounts to at most 4\% of the fuel variance and 9\% of the delay variance (the FEFS arms behave the same). The information structure of \Cref{subsec:wind} explains the contrast. Each aircraft draws its wind node independently, so per-aircraft wind effects average out within the fleet. A demand realization, including stream bunching and weight-class mix, shifts the whole scenario. Thus, the error bars of \Cref{fig:sweep_metrics} mainly measure demand uncertainty, while wind acts on each descent individually and is absorbed by wind-aware planning.

\begin{figure*}[!t]
\centering
\includegraphics[width=\textwidth]{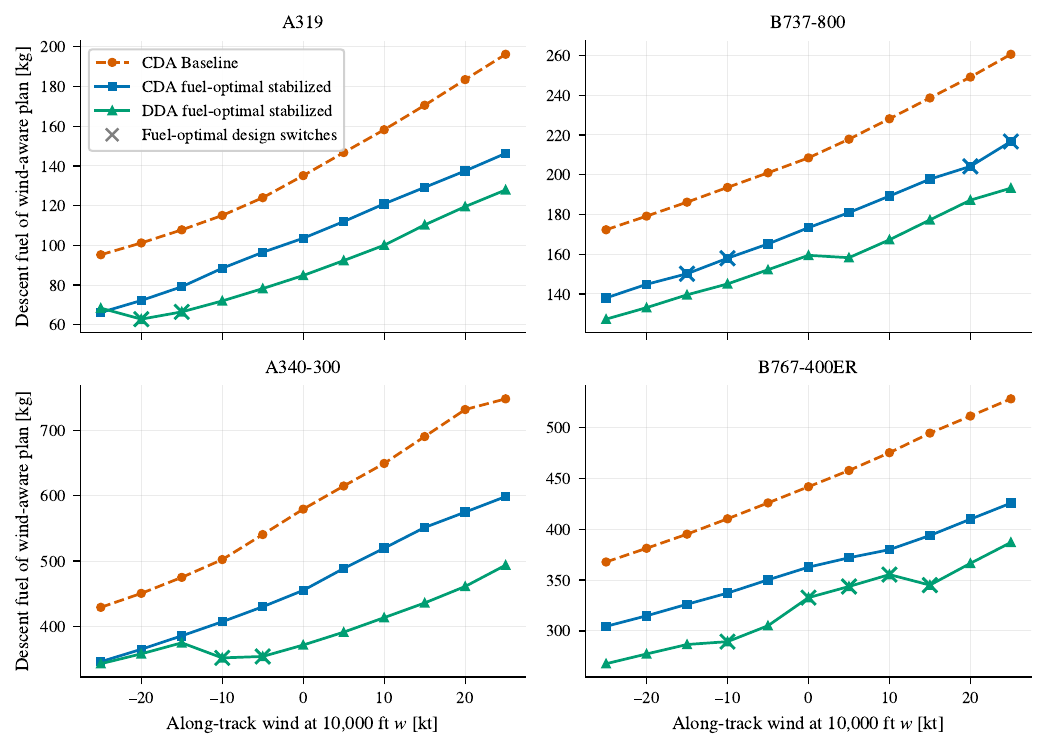}
\caption{Descent fuel of the wind-aware plan versus the along-track wind component $w$ at 10{,}000 ft (headwind positive) for each airframe, comparing the CDA Baseline ($V_{\min}{+}10$ triggers at the 5{,}000 ft platform capture), the fuel-optimal stabilized CDA design, and the fuel-optimal stabilized DDA design on the cached design lattice. Crosses mark the nodes at which the fuel-optimal design changes.}
\label{fig:wind_fuel}
\end{figure*}

\subsection{Noise Footprint}\label{subsec:case1_noise}

The vertical redesign changes where the fleet flies low. We therefore close the case study with the ground-noise consequence. We evaluate the maximum A-weighted sound level $L_{A,\max}$ on a 0.75 nmi ground grid spanning the TRACON. Every 10~s of the replayed scenario (level surplus at 10{,}000~ft, wind-aware descent, final approach), the noise-power-distance curve of each airborne aircraft, taken from the EUROCONTROL Aircraft Noise and Performance (ANP) database \cite{eurocontrol_anp} and interpolated in the logarithm of the slant distance following the Doc~29 convention \cite{ecac_doc29}, gives its level at each grid point. Simultaneous aircraft add on a pressure basis, and the cumulative footprint is the maximum over time. NPD curves of closely related ANP types substitute for airframes without a native entry (A320-family curves for the A319 and B737-800, A330-family curves for the A340-300, and 767-300ER curves for the B767-400ER). The NPD lookup holds the engine state fixed. It therefore does not represent the airframe-noise increment of earlier flap deployment in the DDA, and the comparison isolates the geometric effect of altitude and path.

\Cref{fig:cumnoise} compares the cumulative footprints of the two co-optimized architectures on the congested scenario of \Cref{fig:paths} (seed 8, 27 aircraft). The peak on short final is the same (73.6~dBA) and the 65 dBA area is 10--11~nmi$^2$ for both. On the matched $3.0^\circ$ final, the two 55 dBA footprints are within a few percent, 147~nmi$^2$ for the CDA and 152~nmi$^2$ for the DDA. The difference map is small and shows no systematic sign. Louder streaks along individual feed corridors reflect different committed extensions $d_i$, which shift where each aircraft reaches low altitude, not a systematic altitude difference. Once the steeper final is removed, delayed deceleration no longer produces a quieter descent. The CDA Baseline footprint for the same scenario is the smallest of the three at 144~nmi$^2$, because both fuel-optimal designs capture closer in than the Baseline platform. \Cref{fig:noise_density} quantifies this ordering across demand.

\Cref{fig:noise_density} reports the same metrics across the demand sweep (10 seeds per level). Three observations follow. First, the peak combined $L_{A,\max}$ is flat at 73--74~dBA for every architecture beyond about 12~AC/hr. The overflight of the grid cell closest to the final-approach path sets this level, and in-trail separation keeps simultaneous aircraft too far apart to raise it. The maximum noise level at a point does not depend much on arrival density or on the procedure. What demand changes is the exposed area. Second, the 55 dBA area grows steeply with demand while path stretching pushes the corridors outward, and saturates beyond about 40~AC/hr together with the stretch authority (\Cref{fig:sweep_metrics}). Both co-optimized arms sit slightly above the Baseline on this plateau: at 40~AC/hr the co-optimized CDA is 9--11\% above (173 against 158~nmi$^2$) and the co-optimized DDA 3--5\% above (164~nmi$^2$), because both fuel-optimal families replace the 5{,}000 ft platform with a continuous idle path that captures closer in and crosses 5{,}000~ft near 15~nmi, so each descent runs a few hundred feet lower over the 10--15~nmi band. Fuel optimization therefore carries a small ground-noise penalty, slightly larger for the CDA family, and the DDA attains the fuel saving of \Cref{fig:sweep_savings} at close to the Baseline footprint. Third, the 65 dBA area stays at 9--11~nmi$^2$ for every arm and demand level. It is confined to the short final and the base-turn region that every procedure must fly.

\begin{figure*}[!t]
\centering
\includegraphics[width=\textwidth]{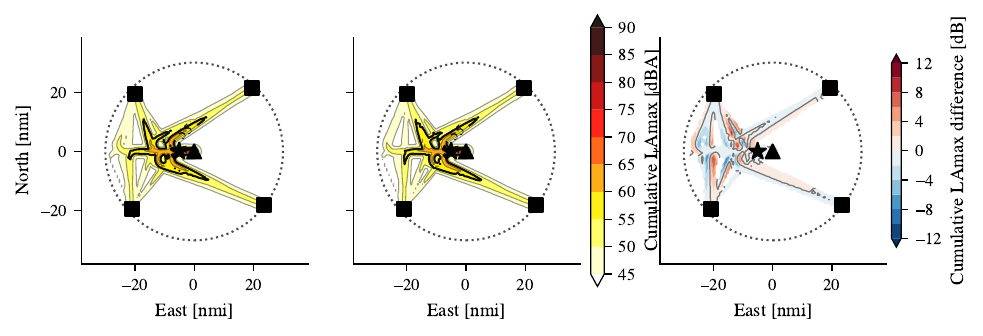}
\caption{Cumulative (time-maximum) combined $L_{A,\max}$ footprint of the congested seed 8 scenario of \Cref{fig:paths}. The left and center panels show the co-optimized CDA and the co-optimized DDA on a shared color scale with the 55 and 65 dBA contours outlined, and the right panel shows the difference between the DDA and the CDA over the exposed area. Levels come from EUROCONTROL ANP noise-power-distance curves with linear pressure summation across simultaneous aircraft on a 0.75 nmi grid. Squares mark the corner fixes, the star marks the FAF, and the triangle marks the runway.}
\label{fig:cumnoise}
\end{figure*}

\begin{figure*}[!t]
\centering
\includegraphics[width=\textwidth]{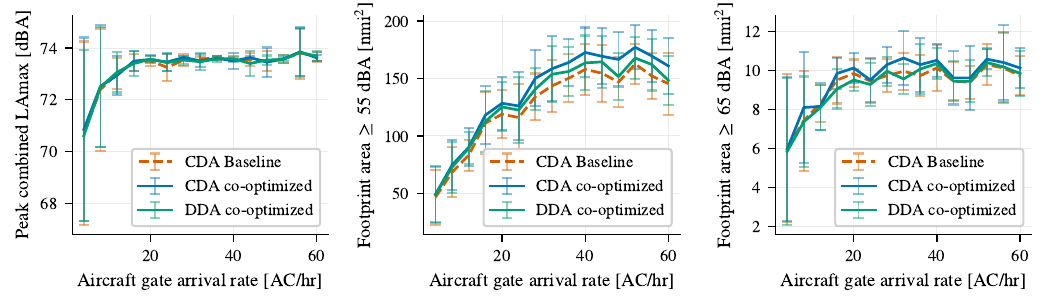}
\caption{Ground-noise metrics versus aircraft gate arrival rate (10 seeds per level, mean $\pm 1\sigma$). The panels show the peak combined $L_{A,\max}$ (left) and the ground area of the cumulative footprint above 55 dBA (center) and 65 dBA (right) for the current-practice CDA, the co-optimized CDA, and the co-optimized DDA under FOFFS.}
\label{fig:noise_density}
\end{figure*}

\section{Case Study 2: Published Arrival Flows to Runway 8L}\label{sec:case2}

Case Study~2 applies the co-optimization to the arrival structure flown in operation. The study asks two questions. First, does the joint commitment of \Cref{sec:methodology} carry over when several published flows enter from different sides of the field and merge onto one final approach course with charted altitude restrictions? The four symmetric corner fixes of Case Study~1 give every aircraft the same delay authority, which is an artifact of the model rather than a property of the airspace. Second, what do the published crossing restrictions cost? Running the same traffic, winds, and geometry with the charted floors enforced and relaxed answers this directly and bounds the benefit reported in Case Study~1.

\subsection{The Published Arrival Structure and the Experimental Setup}\label{subsec:case2_setup}

All experiments use the six published area-navigation arrival flows to Runway~8L at Hartsfield--Jackson Atlanta International Airport. The operation is east-flow traffic inside the same A80 terminal airspace as Case Study~1 \cite{ZTL_A80_LOA_2022,faa_dtpp_katl}. Each flow enters at its standard terminal arrival route boundary fix, flies the charted transition leg to a metering fix, and is then radar-vectored to the final approach course. \Cref{tab:case2_flows} lists the six flows and \Cref{fig:case2_paths} shows them in plan view. The six metering fixes sit between 27.1 and 30.4 nmi from the threshold, which is the same ring the four synthetic corner fixes of Case Study~1 occupy, so metering starts at the same gate condition of 10{,}000 ft, 240 KCAS and clean configuration. The charted legs upstream of the metering fix are context rather than decisions.

\begin{table}[t]
\centering
\caption{The six published Runway 8L arrival flows. The charted leg is the published transition from the boundary fix to the metering fix, and the ring distance is the along-track distance of the metering fix to the threshold. $D_i(0)$ is the track distance from the metering fix to the FAF at zero extension, and the last column is the track distance each flow spends per nautical mile of centerline extension $d_i$.}
\label{tab:case2_flows}
\begin{tabular}{llccccc}
\toprule
\makecell[l]{Metering\\fix} & \makecell[l]{Boundary\\fix}
& \makecell{Charted\\leg (nmi)} & \makecell{Ring\\(nmi)}
& Group & \makecell{$D_i(0)$\\(nmi)}
& \makecell{Track nmi\\per nmi}\\
\midrule
MRCHH & CHPPR & 10.0 & 29.7 & near side & 26.6 & 1.0\\
STHRN & GLAVN & 10.1 & 29.0 & near side & 24.6 & 0.7\\
JNGLE & HALRR & 12.5 & 28.7 & near side & 25.5 & 1.0\\
SMELY & ONDRE & 13.2 & 27.1 & far side  & 35.7 & 1.9\\
HAARY & OZZZI & 10.0 & 30.4 & far side  & 40.4 & 1.9\\
TIZZY & SITTH & 13.0 & 28.4 & far side  & 38.2 & 1.9\\
\bottomrule
\end{tabular}
\end{table}

The vector segment from the metering fix reuses the closed-form geometry of \Cref{subsec:geometry}, re-anchored on the published FAF SCHEL, with a tangent leg, an RF arc of radius $r = 2.5$~nmi, and a centerline extension $d_i$ upstream of SCHEL. The published final-approach fixes lie within 0.02 nmi of the extended Runway 8L centerline, so that centerline is the axis of the local frame. The committed path must stay inside the 30 nmi terminal boundary. Its farthest point lies on the RF turn circle, whose center sits at $(-(|\mathrm{FAF}| + d_i), \pm r)$, so containment requires
\begin{equation}
\sqrt{(|\mathrm{FAF}| + d_i)^2 + r^2} + r \;\le\; R_{\mathrm{TRACON}},
\label{eq:containment}
\end{equation}
which gives $d_{\max} = 21.59$~nmi. Bounding only the intercept point would permit 24.2 nmi and let the arc bulge 2.6 nmi outside the boundary, so we use the tighter form. The cap binds for 9\% of aircraft at the highest demand level and for none below 36 AC/hr. Case Study~1 keeps its own 27.5 nmi authority, which is not containment-limited because its FAF, corner fixes and boundary differ.

The geometry is asymmetric, and that asymmetry is the substantive difference from Case Study~1. The near-side flows reach the FAF in 24.6--26.6 nmi and buy delay at 0.7--1.0 track nmi per nautical mile of extension. The far-side flows start 35.7--40.4 nmi out, because they are vectored across the north or the south side of the field to join the final from the west, and they pay 1.9 track nmi per nautical mile of extension on top of the 8--13 nmi of surplus track miles they carry even at $d_i = 0$.

\begin{figure*}[!t]
\centering
\includegraphics[width=\textwidth]{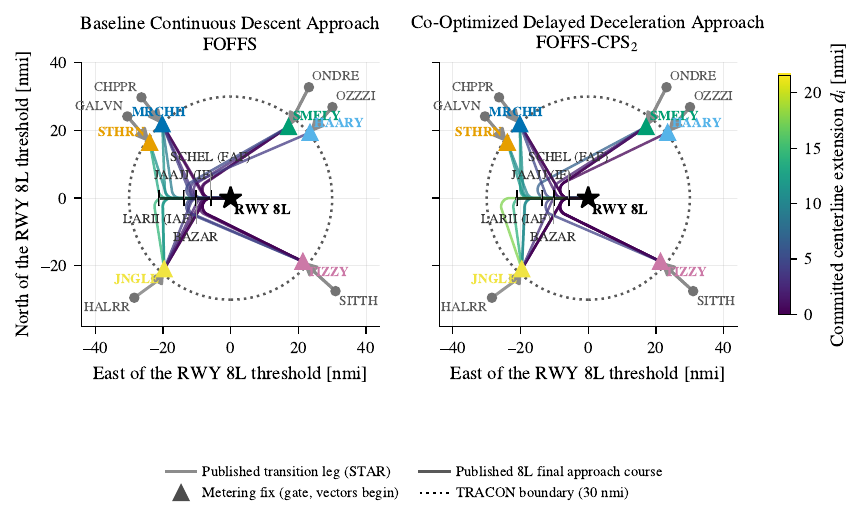}
\caption{Plan view of the six published Runway 8L arrival flows with the committed paths of one 27 aircraft scenario, comparing current practice (Baseline continuous descent under FOFFS, left) with the co-optimized delayed deceleration under FOFFS-CPS$_2$ (right). Gray lines are the charted transition legs, triangles mark the metering fixes at which metering and vectoring begin, and each committed path is colored by its centerline extension $d_i$. The dotted circle is the 30 nmi terminal boundary, which constrains $d_i$ through \eqref{eq:containment}. The two schedules spend almost the same total extension, 117 and 118 nmi, yet the co-optimized fleet burns 20.8\% less fuel.}
\label{fig:case2_paths}
\end{figure*}

The final approach course is common to all six flows and carries the charted at-or-above crossings of \Cref{tab:case2_floors}. These are the binding published restrictions in this environment. We screen them on the wind-aware plan of \Cref{subsec:oracle} at the observed node $w_i$ with a 50 ft tolerance, and we discard any design whose plan crosses a floor low before simulation. Because the plan is FAF-anchored and absorbs surplus track miles level at the gate altitude upstream of top of descent, planned altitude is a function of distance to go alone. The screen is therefore a property of the tuple (type, architecture, design, wind) and does not couple to $d_i$. On the centerline extension, the screen is the charted crossing restriction. Upstream of the intercept, it is the vectoring platform minimum, because aircraft are vectored at or above intercept altitude until glideslope capture. Screening in distance to go is therefore conservative. The SCHEL crossing is met by construction. The plan is on the $3.0^\circ$ glideslope at the FAF and crosses at 2{,}846~ft against the charted 2{,}900~ft, treated here as an at-or-above floor. Flow-specific step-downs upstream of the intercept are not modeled because the vector segment replaces the charted transition downstream of the metering fix.

\begin{table}[t]
\centering
\caption{Charted at-or-above crossing restrictions on the common Runway 8L final approach course. Distances are along-track to the threshold, and the field elevation is 1{,}026 ft.}
\label{tab:case2_floors}
\begin{tabular}{lcl}
\toprule
Fix & Distance (nmi) & Restriction\\
\midrule
LARII (IAF) & 21.1 & At or above 5{,}000 ft\\
JAAJJ (IF)  & 13.7 & At or above 5{,}000 ft\\
BAZAR       & 10.2 & At or above 4{,}000 ft\\
SCHEL (FAF) &  5.8 & 2{,}900 ft, met on the $3.0^\circ$ glideslope\\
\bottomrule
\end{tabular}
\end{table}

The gate condition matches Case Study~1, so the verified simulation-in-the-loop lattice transfers unchanged and the floor screen is the only addition. Each flow generates an independent shifted-Poisson stream with a 90 s minimum gap, the Case-1 convention, over the same four simulator-calibrated airframes of \Cref{tab:ac_params}, and wind uses the same five-node quadrature of the $\pm 20$ kt climatology with each aircraft observing its node at the gate. The demand sweep sets the same per-flow rate on all six flows and spans nominal aggregate rates from 6 to 60 AC/hr, which the 90 s minimum gap delivers as 5.4 to 45.9 aircraft per hour, with 50 seeds per level and common random numbers across policies. The policy arms are the CDA Baseline ($V_{\min}{+}10$ triggers at the 12.48 nmi base capture, which is the charted 5{,}000 ft platform on the $3.0^\circ$ glideslope), the co-optimized CDA and DDA families under FEFS, FOFFS and FOFFS-CPS$_{1,2,3}$, and a floors-relaxed run of each co-optimized family that keeps everything else fixed. Because the six flows merge onto one course and every aircraft can absorb delay on its own centerline extension, bounded position shifting is operationally meaningful here in the same way as in Case Study~1. An aircraft absorbing delay on a longer extension can be overtaken by a trailing aircraft from another flow.

\subsection{Effect of the Charted Floors on the Design Space}\label{subsec:case2_designspace}

The floors remove a specific part of the decision set. Across the four airframes, both architectures, and five quadrature nodes, the screen removes 5 of the 16 lattice designs, or 31\%. The JAAJJ 5{,}000 ft floor accounts for every removal. The pruned cells are exactly the 10.0 nmi glideslope-capture row, whose plans cross JAAJJ between 4{,}700 and 4{,}950~ft. The pruning is systematic and removes the designs preferred by the unconstrained problem. With the floors enforced, the fuel-optimal design is the 11.5 nmi capture for every airframe and both architectures. With the floors relaxed, it is the 10.0 nmi capture. The published structure pins glideslope capture near the charted platform. Two tail cells differ for a separate reason. The A340-300 DDA loses three designs to the stabilization check at the 20 kt tailwind node and two at the 10 kt node, where the wide-body cannot complete its late deceleration. The A319 DDA loses one design at the same node. The stabilized menus of the co-optimized arms never emptied at any quadrature cell, so the stabilization requirement remained enforceable as a hard constraint under the wind-observable information structure of \Cref{subsec:wind}.

\Cref{fig:case2_profiles} shows the resulting trajectories for a Large type and a Heavy type at the zero-wind node, plotted against the charted floor corridor. The three floor-feasible profiles stay above every crossing and differ only in where they place the deceleration. The floors-relaxed CDA optimum is drawn for comparison and crosses JAAJJ about 300 ft low, which is why the screen discards it. \Cref{fig:case2_menu} shows the design menus with the pruned designs marked, and \Cref{tab:case2_champions} quantifies the design-level gains against the same Baseline as Case Study~1. Because the metering gate and the descent evaluation are identical, the Baseline column of \Cref{tab:case2_champions} reproduces that of \Cref{tab:champions} exactly, and the two tables are directly comparable. The optimized-CDA saving falls from 17--23\% in the free descent to 6.5--15.5\% under the floors, and the optimized-DDA saving from 24--37\% to 20--34\%. The delayed deceleration keeps most of its advantage because it stays clean and fast to the same capture point, so its plan already sits higher over JAAJJ, whereas part of the CDA gain in Case Study~1 came from moving the capture point closer in, which the floors forbid.

\begin{figure*}[!t]
\centering
\begin{subfigure}{0.495\textwidth}
\includegraphics[width=\textwidth]{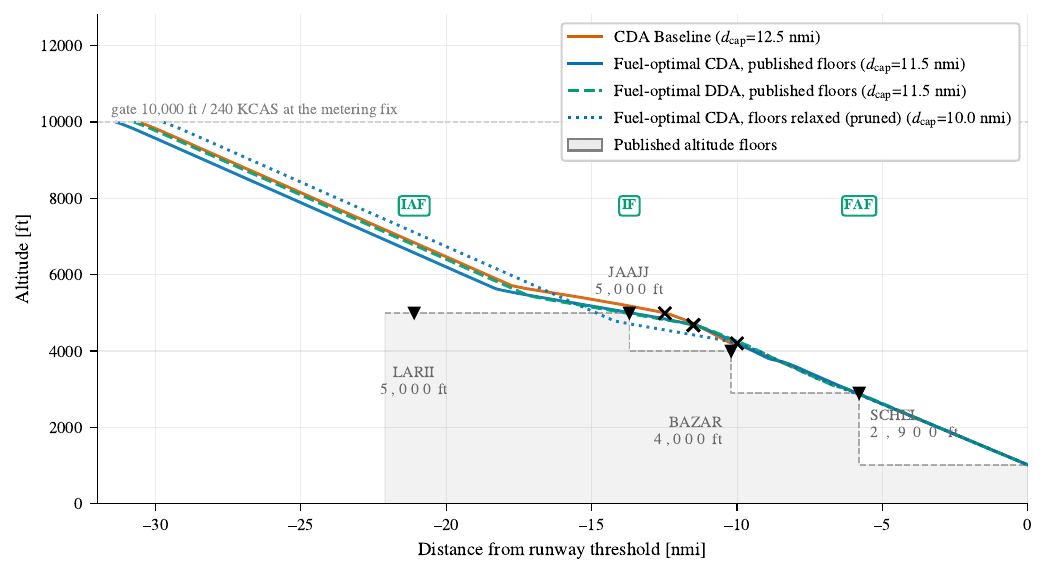}
\caption{B737-800, altitude}
\end{subfigure}\hfill
\begin{subfigure}{0.495\textwidth}
\includegraphics[width=\textwidth]{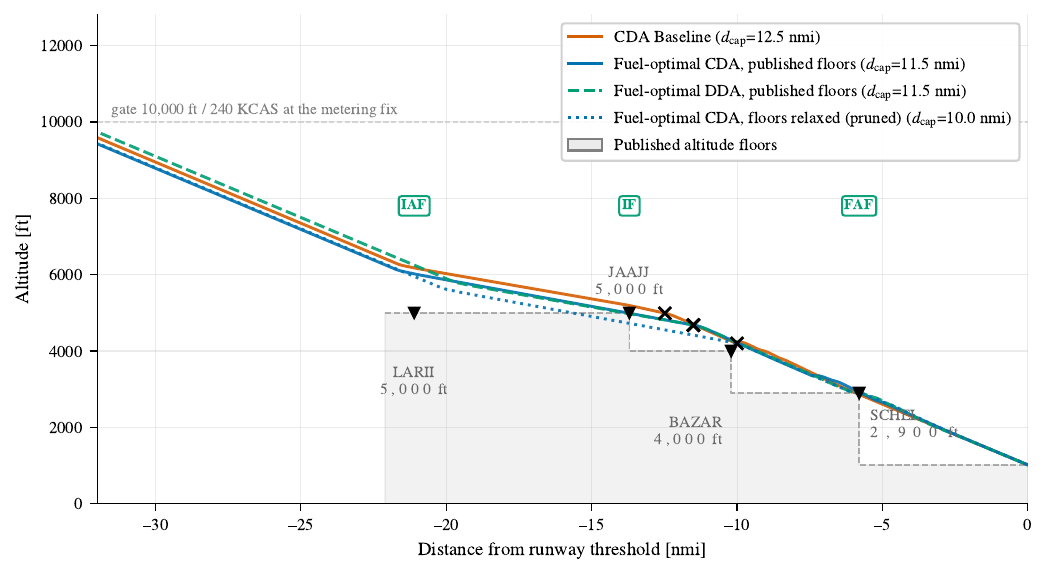}
\caption{A340-300, altitude}
\end{subfigure}\\[2pt]
\begin{subfigure}{0.495\textwidth}
\includegraphics[width=\textwidth]{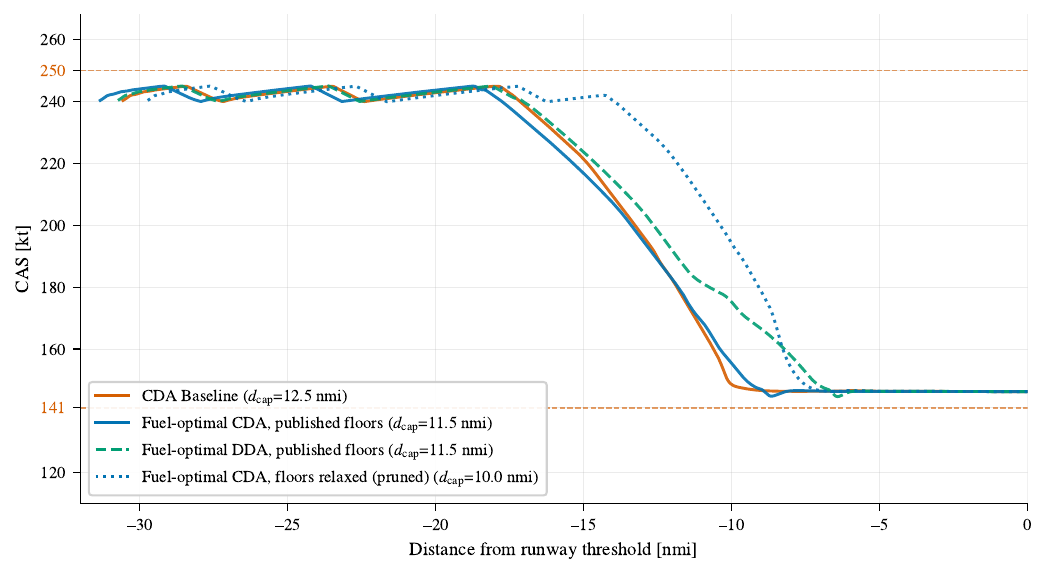}
\caption{B737-800, CAS}
\end{subfigure}\hfill
\begin{subfigure}{0.495\textwidth}
\includegraphics[width=\textwidth]{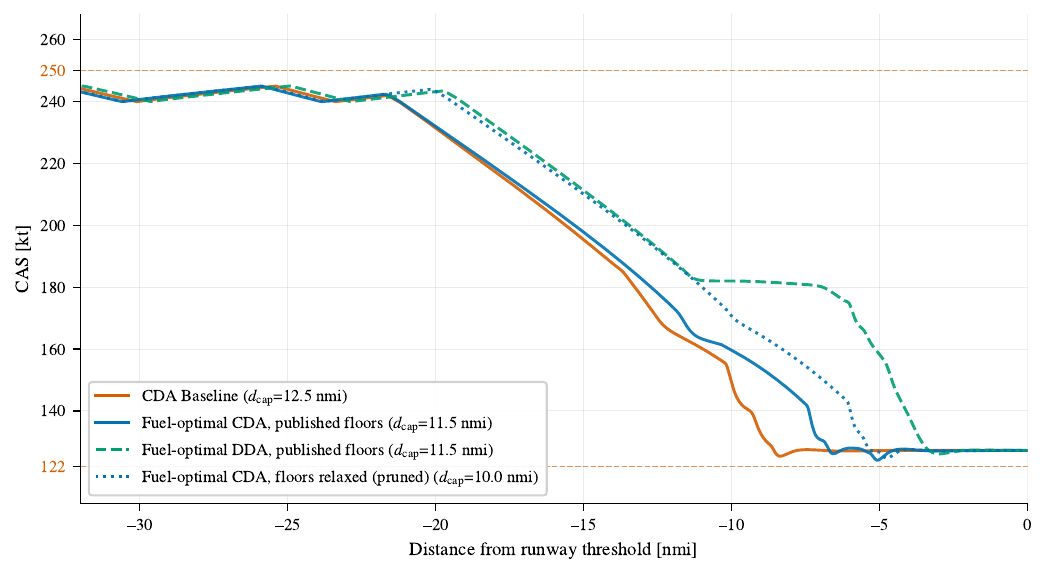}
\caption{A340-300, CAS}
\end{subfigure}
\caption{Simulated trajectories on the published final approach course at the zero-wind node for a Large type (B737-800, left) and a Heavy type (A340-300, right), showing the CDA Baseline, the fuel-optimal floor-feasible CDA, the fuel-optimal floor-feasible DDA, and the fuel-optimal CDA when the floors are relaxed. The shaded staircase is the charted altitude corridor of \Cref{tab:case2_floors}, triangles mark the constrained fixes, and crosses mark the glideslope captures. The floors-relaxed optimum captures at 10.0 nmi and crosses JAAJJ below the 5{,}000 ft floor, which is why the screen removes it. The red CAS ticks mark the 250 kt terminal limit and $V_{\mathrm{ref}}$.}
\label{fig:case2_profiles}
\end{figure*}

\begin{table}[t]
\centering
\caption{Fuel-optimal stabilized designs at the zero-wind node (fuel from the 10{,}000 ft metering gate to the threshold), with the charted floors enforced and with them relaxed. Savings are relative to the CDA Baseline ($V_{\min}{+}10$ triggers at the base capture distance), which is the same Baseline as \Cref{tab:champions}.}
\label{tab:case2_champions}
\begin{tabular}{lccccccccc}
\toprule
& \multicolumn{1}{c}{Baseline}
& \multicolumn{4}{c}{Charted floors enforced}
& \multicolumn{4}{c}{Floors relaxed}\\
\cmidrule(lr){3-6}\cmidrule(lr){7-10}
& & \multicolumn{2}{c}{CDA} & \multicolumn{2}{c}{DDA}
  & \multicolumn{2}{c}{CDA} & \multicolumn{2}{c}{DDA}\\
\cmidrule(lr){3-4}\cmidrule(lr){5-6}\cmidrule(lr){7-8}\cmidrule(lr){9-10}
Type & $F$ (kg) & $F$ (kg) & saving & $F$ (kg) & saving
     & $F$ (kg) & saving & $F$ (kg) & saving\\
\midrule
A319       & 135.1 & 114.2 & 15.5\% & 90.9  & 32.7\%
                   & 103.6 & 23.3\% & 84.8  & 37.2\%\\
B737-800   & 208.5 & 195.0 &  6.5\% & 166.8 & 20.0\%
                   & 173.3 & 16.9\% & 159.4 & 23.5\%\\
A340-300   & 579.5 & 501.2 & 13.5\% & 385.2 & 33.5\%
                   & 455.1 & 21.5\% & 371.3 & 35.9\%\\
B767-400ER & 441.8 & 390.9 & 11.5\% & 336.3 & 23.9\%
                   & 362.7 & 17.9\% & 332.7 & 24.7\%\\
\bottomrule
\end{tabular}
\end{table}

\begin{figure*}[!t]
\centering
\includegraphics[width=\textwidth]{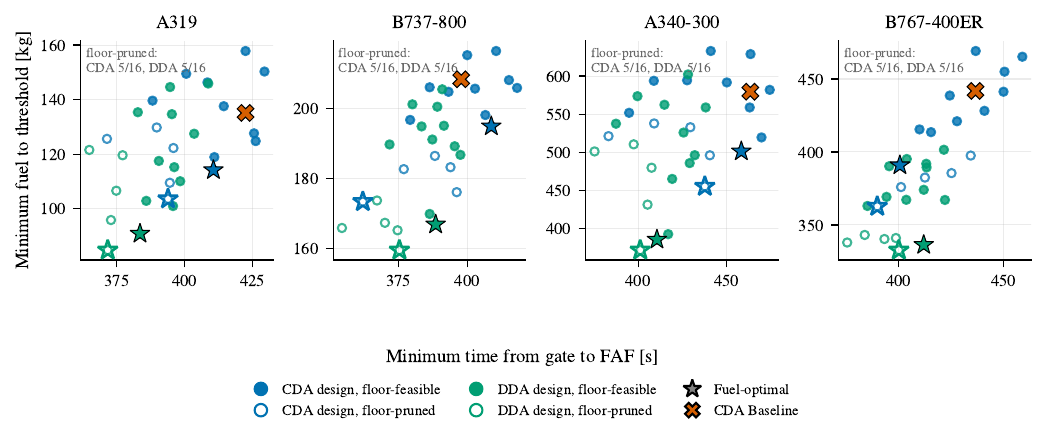}
\caption{Design menus at the zero-wind node, showing gate-to-FAF time versus gate-to-threshold fuel for every lattice design and airframe. Filled markers are floor-feasible designs and open markers are the designs the charted floors remove, which are exactly the 10.0 nmi capture row. Stars mark the fuel optima of each architecture and the cross is the CDA Baseline. The pruned designs occupy the fast and cheap corner of both clouds, so the floors remove the part of the menu that the unconstrained problem would select.}
\label{fig:case2_menu}
\end{figure*}

\subsection{Delay Authority Across the Arrival Flows}\label{subsec:case2_flows}

\Cref{fig:case2_flows} separates the fleet result by flow. Three patterns follow from the geometry of \Cref{tab:case2_flows}. The scheduler takes delay from the near side. Mean committed extensions are 7.6--10.9 nmi on MRCHH, STHRN, and JNGLE, compared with 1.5--2.6 nmi on SMELY, HAARY, and TIZZY. A nautical mile of far-side extension costs 1.9 track nmi and therefore roughly twice the fuel. Far-side flows still burn more fuel in absolute terms, 580--674 kg per aircraft compared with 325--515 kg at the Baseline. Surplus track miles account for 32--43\% of that fuel on the far side and 14--26\% on the near side. The co-optimized DDA saves a similar share of descent fuel on every flow, 22--29\%, but total-fuel savings range from 13.9\% on the far side to 24.0\% on STHRN, the flow with the least surplus track distance. Descent design cannot remove surplus track miles. Lowering far-side fuel would require a shorter route or a cheaper absorption altitude. A symmetric four-corner model averages out this behavior.

\begin{figure*}[!t]
\centering
\includegraphics[width=\textwidth]{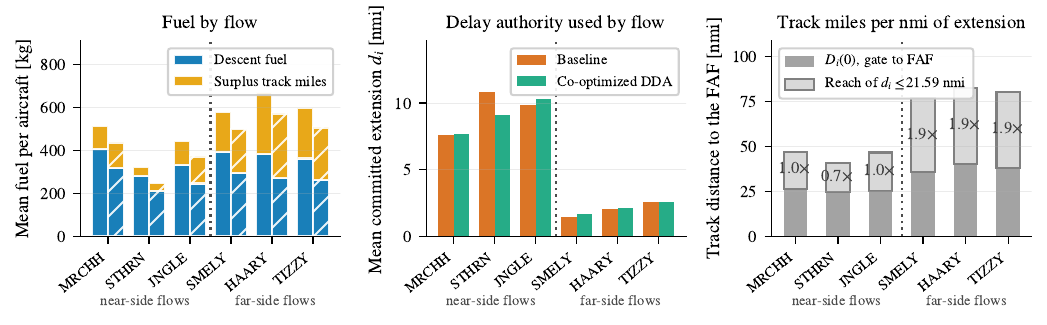}
\caption{Per-flow behavior on the six published arrival flows, grouped into near-side and far-side flows. The left panel splits the mean fuel per aircraft into descent fuel and the fuel spent on surplus track miles for the Baseline (left bar) and the co-optimized DDA (hatched right bar). The center panel shows the mean committed extension $d_i$ by flow. The right panel shows the track distance from the metering fix to the FAF at zero extension, with the additional reach of the extension authority $d_i \leq 21.59$ nmi stacked on top and annotated with the track miles spent per nautical mile of extension.}
\label{fig:case2_flows}
\end{figure*}

\subsection{Fleet Results Across Demand}\label{subsec:case2_fleet}

\Cref{fig:case2_timespace} shows the FAF landing timeline of one saturated 40 aircraft scenario. Under the Baseline the delay authority runs out late in the arrival push and four aircraft land with separation shortfalls totaling 267~s. The co-optimized DDA under CPS$_2$ lands the same traffic with a single 14 s shortfall and 8.3\% less fleet fuel, using the faster transits of the clean descent together with the re-sequenced slot order.

\begin{figure*}[!t]
\centering
\includegraphics[width=\textwidth]{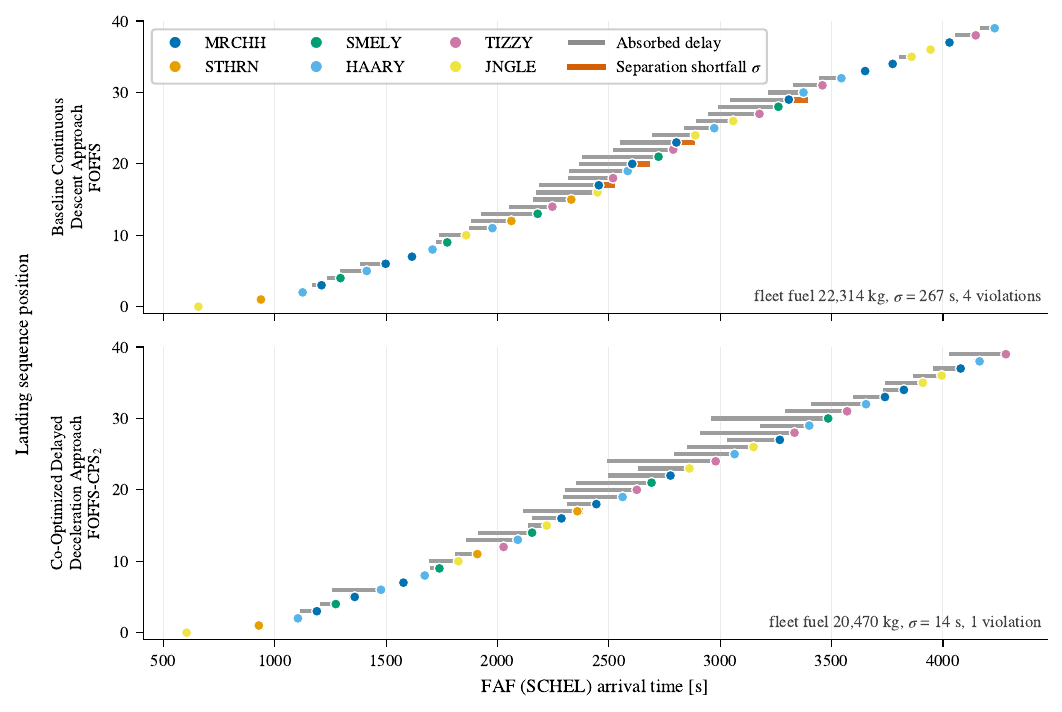}
\caption{FAF landing timeline for a saturated scenario (40 aircraft), comparing current practice (Baseline continuous descent under FOFFS, top) with the co-optimized delayed deceleration under FOFFS-CPS$_2$ (bottom). Each dot is an aircraft at its committed FAF time, colored by its arrival flow and stacked by landing sequence position. The gray bar behind each dot is the delay the aircraft absorbs relative to its unconstrained earliest arrival, and a red-orange bar marks any residual separation shortfall $\sigma$. The Baseline lands four aircraft with shortfalls totaling 267 s, whereas the co-optimized schedule absorbs the same demand with a single 14 s shortfall while burning 8.3\% less fleet fuel (20{,}470 kg against 22{,}314 kg).}
\label{fig:case2_timespace}
\end{figure*}

\Cref{fig:case2_sweep_metrics,fig:case2_sweep_savings} report the demand sweep and \Cref{tab:case2_sweep} tabulates selected levels. Three regimes appear, and they reproduce the Case-1 pattern on the published geometry.

Below the saturation onset the savings are flat and entirely design-driven. The co-optimized CDA saves 8.8--10.0\% and the co-optimized DDA 17.7--21.2\% against the Baseline, and CPS coincides with FOFFS because no slack has appeared yet. These fleet numbers sit below the design-level numbers of \Cref{tab:case2_champions} because every arm also pays for the surplus track miles that the published vectoring geometry imposes, and no descent design can remove them. Level flight at the metering altitude already accounts for 21.8\% of Baseline fleet fuel at the lowest demand level, and that share rises to 29.4\% at 36 AC/hr and 42.5\% at the top level, which sets the ceiling on what any descent-side intervention can achieve.

Saturation sets in between 36 and 42 AC/hr nominal, where the Baseline slack rises from 38.5 to 119~s per scenario. Above the onset the plain FOFFS saving erodes to 4.9\% for the CDA family and 10.2\% for the DDA family at the top level, because congestion forces level extension on every arm and the shared cost dilutes the descent-side advantage. This is the same mechanism as in Case Study~1. CPS re-sequencing recovers part of the saving and most of the safety margin. At the top level CPS$_2$ raises the paired saving to 6.3\% (CDA) and 12.3\% (DDA) and cuts the mean slack from 505 to 241~s and from 483 to 230~s, with violating aircraft falling from about 6.5 to 4.2--4.3 per scenario. CPS$_3$ trims the slack further, to 195 and 186~s, and gives back fuel, reflecting the lexicographic priority order. With safety ahead of fuel, the scheduler spends fuel to protect the schedule. The recovery is smaller than the free-descent case study shows, because the containment cap of \eqref{eq:containment} leaves less extension authority for CPS to exploit.

FEFS is the weak ordering, as in Case Study~1. Entry-order sequencing gives up 1.9 percentage points of saving at the lowest demand and up to 6.5 points near 24 AC/hr, and it carries 1.6 to 20 times the separation slack of FOFFS at every congested level. At the top two levels its paired fuel saving crosses above plain FOFFS, at 9.6\% (CDA) and 14.8\% (DDA), while its slack stays 1.6 times higher and its violation count reaches 9.7 aircraft per scenario. The two orderings are comparable only through the slack panel. FOFFS spends fuel to protect the schedule where FEFS violates it.

\begin{figure*}[!tp]
\centering
\includegraphics[width=0.95\textwidth]{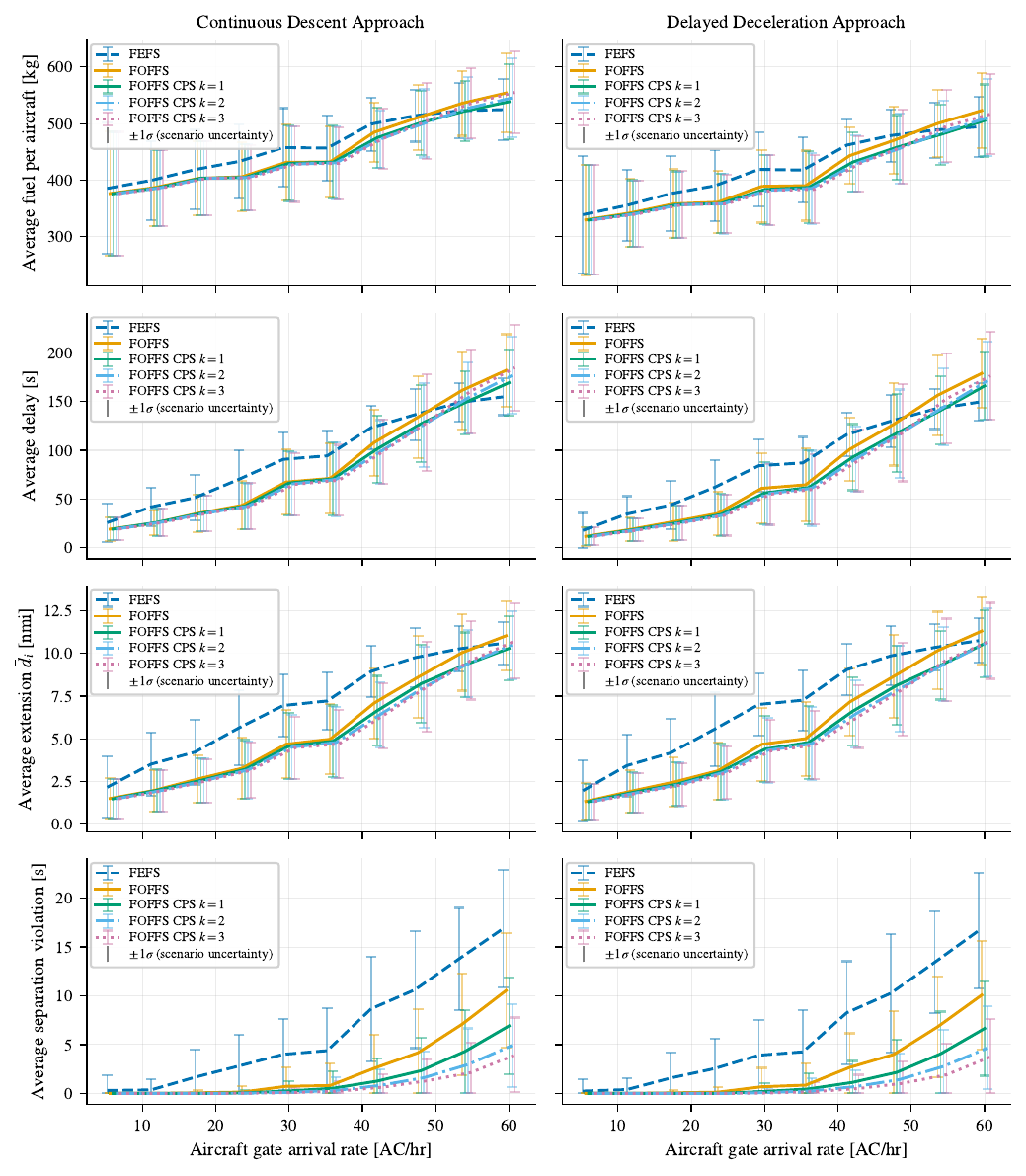}
\caption{Average per-aircraft fuel burn, average delay, average centerline extension $\bar d_i$, and average separation violation (rows) versus aircraft gate arrival rate, aggregated over the six published flows, for the CDA family (left column) and the DDA family (right column) under FEFS, FOFFS, and FOFFS-CPS$_{1,2,3}$ (50 seeds per demand level). Within each row the two columns share the vertical axis, so a horizontal read gives the difference between the two architectures directly. Error bars show the $\pm 1\sigma$ scenario (demand and wind) uncertainty, and lower bars are clipped at the physical zero.}
\label{fig:case2_sweep_metrics}
\end{figure*}

\begin{figure*}[!t]
\centering
\includegraphics[width=\textwidth]{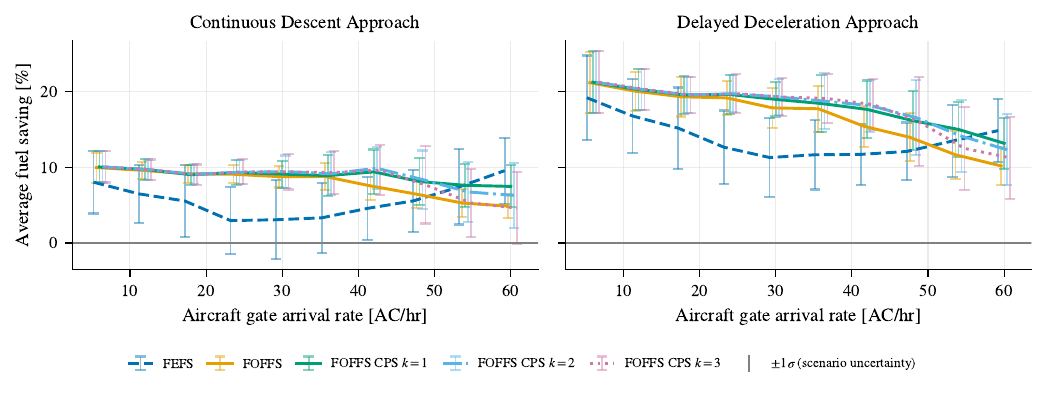}
\caption{Average fleet-fuel saving relative to the CDA Baseline, paired per scenario, versus aircraft gate arrival rate on the six published flows, for the CDA family (left) and the DDA family (right) under FEFS, FOFFS, and FOFFS-CPS$_{1,2,3}$ on a shared vertical scale. Error bars show the $\pm 1\sigma$ scenario uncertainty.}
\label{fig:case2_sweep_savings}
\end{figure*}

\begin{table*}[t]
\centering
\caption{Case Study 2 demand-sweep summary (50 seeds per level, common random numbers). The Baseline is current practice (CDA with $V_{\min}{+}10$ triggers at the base capture distance) and savings are paired per scenario against it. The floors-relaxed arms repeat the FOFFS commitment with the charted floors removed. $\sigma$ is the mean total separation slack per scenario and $\bar n_{\mathrm{viol}}$ the mean number of violating aircraft. Column headings give the nominal aggregate gate arrival rate in AC/hr.}
\label{tab:case2_sweep}
\begin{tabular}{clcccccccc}
\toprule
& & \multicolumn{4}{c}{Fuel saving over Baseline (\%)}
& \multicolumn{2}{c}{$\sigma$ (s)}
& \multicolumn{2}{c}{$\bar n_{\mathrm{viol}}$}\\
\cmidrule(lr){3-6}\cmidrule(lr){7-8}\cmidrule(lr){9-10}
& Policy & 12 & 36 & 48 & 60 & 48 & 60 & 48 & 60\\
\midrule
\multicolumn{2}{l}{Baseline (current practice)}
                & --- & --- & --- & --- & 194.6 & 524.6 & 3.44 & 6.84\\
\midrule
\multirow{6}{*}{CDA}
& FEFS          &  6.5 &  3.3 &  5.6 &  9.6 & 429.3 & 794.6 & 6.54 & 9.70\\
& FOFFS         &  9.6 &  8.8 &  6.4 &  4.9 & 176.3 & 504.7 & 3.18 & 6.48\\
& FOFFS-CPS$_1$ &  9.7 &  8.9 &  8.1 &  7.5 & 100.5 & 336.3 & 2.26 & 5.14\\
& FOFFS-CPS$_2$ &  9.7 &  9.1 &  8.4 &  6.3 &  69.9 & 240.9 & 1.70 & 4.20\\
& FOFFS-CPS$_3$ &  9.7 &  9.3 &  7.7 &  4.6 &  55.1 & 194.7 & 1.60 & 4.00\\
& FOFFS, floors relaxed
                & 15.0 & 13.8 & 10.2 &  7.5 & 163.2 & 487.5 & 2.98 & 6.12\\
\midrule
\multirow{6}{*}{DDA}
& FEFS          & 16.8 & 11.7 & 12.1 & 14.8 & 415.2 & 785.4 & 6.08 & 9.78\\
& FOFFS         & 20.1 & 17.7 & 14.0 & 10.2 & 169.0 & 482.9 & 3.02 & 6.60\\
& FOFFS-CPS$_1$ & 20.3 & 18.4 & 16.2 & 13.2 &  93.4 & 324.3 & 2.08 & 5.04\\
& FOFFS-CPS$_2$ & 20.3 & 18.7 & 16.6 & 12.3 &  62.0 & 229.6 & 1.36 & 4.34\\
& FOFFS-CPS$_3$ & 20.3 & 19.1 & 16.1 & 11.2 &  44.9 & 186.4 & 1.26 & 3.92\\
& FOFFS, floors relaxed
                & 23.0 & 20.4 & 16.0 & 11.6 & 159.7 & 470.6 & 2.82 & 6.48\\
\bottomrule
\end{tabular}
\end{table*}

\subsection{The Cost of the Published Crossing Restrictions}\label{subsec:case2_cost}

The floors-relaxed arms price the published structure on identical traffic, winds, and geometry. \Cref{fig:case2_floors} reports the two saving curves and their difference. Below saturation the charted floors cost 6.7\% of fleet fuel for the CDA family and 3.6\% for the DDA family. The penalty falls to 2.8\% and 1.6\% at the top demand level, because sequencing rather than descent design then dominates the fuel, and the level extension that congestion forces on every arm is unaffected by the floors. The DDA pays less than the CDA throughout, for the reason identified in \Cref{subsec:case2_designspace}. Its clean profile already clears JAAJJ, so it loses less when the low capture row is removed.

Placing the two case studies side by side isolates what the published structure costs at the two levels of the problem. At the design level the charted floors remove 31\% of the lattice, pin the glideslope capture to the charted platform, and reduce the optimized-CDA saving from 17--23\% to 6.5--15.5\%. At the fleet level the co-optimized savings below saturation fall from about 15.5\% (CDA) and 23\% (DDA) in the free-descent environment to about 9.6\% and 20\% on the published flows. The floors-relaxed arms attribute that gap. With the floors removed, published-flow savings recover to 15.0\% and 23.0\%, matching the free-descent values. The fleet-level gap between the two case studies comes from the altitude restrictions, not the vectoring geometry. The geometry instead changes how the benefit distributes across the flows, as \Cref{subsec:case2_flows} shows, and how much fuel each flow burns in absolute terms.

The comparison also shows which decision matters at each demand level. Below the wake capacity that decision is the descent architecture, and the delayed deceleration is worth roughly twice the optimized continuous descent on the same final approach angle. At capacity the sequencing decision and the placement of the extension matter more, because together they keep the delivered demand inside the delay authority the airspace actually has. These conclusions rest on the six-flow abstraction of the vector segment and on the uniform per-flow demand of the sweep, and we return to both in \Cref{sec:discussion}.

\begin{figure*}[!t]
\centering
\includegraphics[width=\textwidth]{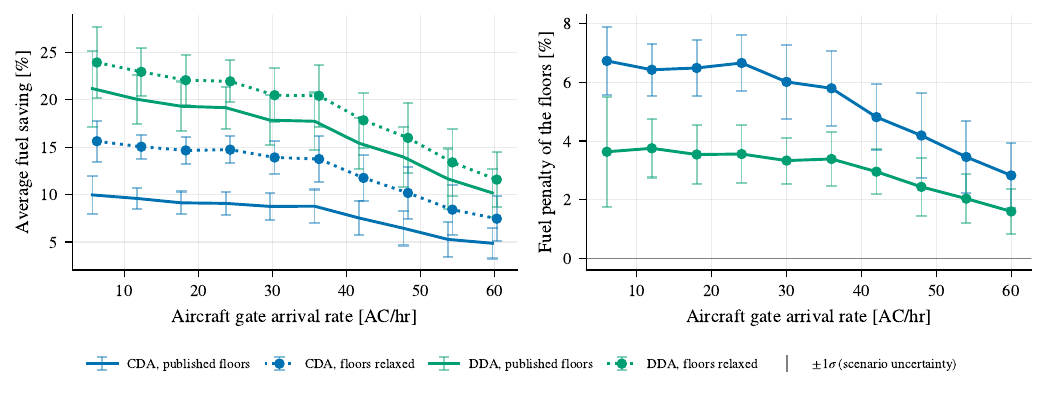}
\caption{Cost of the charted crossing restrictions. The left panel shows the average paired fleet-fuel saving against the Baseline for the CDA and DDA families under FOFFS, with the charted floors enforced (solid) and relaxed (dotted). The right panel shows the difference between the two curves, which is the fuel penalty of the published floors. Error bars show the $\pm 1\sigma$ scenario uncertainty.}
\label{fig:case2_floors}
\end{figure*}

\section{Discussion}\label{sec:discussion}

The individual findings of the two case studies combine into one pattern: the value of each decision in the co-optimization depends on which constraint binds. Below saturation the binding constraint is the descent design itself, and most of the benefit comes from the vertical decision set. At saturation the runway binds, and the sequencing decision and the placement of the extension carry the remaining margin. In the published-flow environment the charted structure binds before either, removing the part of the design space the optimizer would use. Wind, finally, binds the individual aircraft but not the fleet, because the wind-aware plan absorbs it before it reaches the schedule. The subsections take these results in turn, relate them to prior studies, and state what each implies for deployment.

\subsection{The Dominant Decision at Each Demand Level}\label{subsec:disc_leverage}

Read together, the two case studies show that the dominant decision changes with the demand level. Below the wake capacity, the descent design carries almost the entire benefit. The saving is flat across demand levels, matches the design-level menus of \Cref{tab:champions,tab:case2_champions}, and owes nothing to sequencing, because no separation slack has appeared. At and above saturation, the descent design still helps, but the level extension that congestion forces on every arm dilutes it. In the published-flow case study, level flight at the metering altitude accounts for 22\% of fleet fuel at the lowest demand level and 43\% at the highest (\Cref{subsec:case2_fleet}), and no descent design can recover fuel burned on that segment. The remaining margin lies in the sequencing decision and in where the extension is flown. Bounded position shifting with $k = 2$ recovers 1.4 percentage points of fuel saving for the CDA family and 2.1 for the DDA family at the top demand level, and it more than halves the separation slack.

The split has a practical reading. A fuel-efficient arrival procedure delivers its full design-level benefit only while the runway is not the binding constraint. Once it is, a scheduler that can re-pair aircraft within two positions has more leverage than further refinement of the descent profile. The two interventions complement each other, since CPS preserves the descent-side saving above saturation instead of replacing it.

This reading is consistent with the earlier coupled studies and extends a single-aircraft result to the fleet. \citet{robinson2010benefits} found the continuous descent benefit in dense traffic limited by the spacing buffer the procedure requires, and the erosion of the plain-FOFFS saving above saturation in both case studies is the same mechanism measured against a common baseline. \citet{nikoleris2016comparison} showed for individual aircraft that absorbing delay by slowing the descent is cheaper than flying extra level distance. The joint commitment applies that ordering fleet-wide under separation constraints, and the 58--65\% of aircraft that keep their unconstrained fuel-optimal design in Case Study~1 measure how often the separation constraints leave the cheap option available.

\subsection{Implications of the Matched-Angle Comparison}\label{subsec:disc_matched}

Fixing a common $3.0^\circ$ final changes how the DDA literature reads. On the matched final, DDA saves 8\% relative to an optimized CDA for the B737-800 and B767-400ER and 18\% for the A319 and A340-300, and the advantage tracks clean-to-approach speed range rather than weight class. The noise result is consistent with this reading. With the glide path fixed, the two co-optimized architectures produce 55 dBA footprints within a few percent of each other, and the unsigned difference map reflects different committed extensions rather than a systematic altitude difference. The 4--8 dB reductions under the flight track reported by \citet{thomas2021modeling} were measured on procedures that combined delayed deceleration with a steeper final as one design change. The matched-angle result here indicates that the quieter descent comes substantially from the steeper final rather than from delayed deceleration alone, while the fuel advantage survives the separation. Because descent angle changes fuel by itself \cite{turgut2019effects}, the two design changes should be reported separately. The companion study reports them separately at matched final angles and finds that the glideslope angle carries the larger expected-fuel saving for a single aircraft \cite{pang2026optimal}.

The direction of the noise result also matters. Both fuel-optimal families replace the 5{,}000 ft platform with a continuous idle path that captures closer in, so each descent runs a few hundred feet lower over the 10--15 nmi band, and the exposed area above 55 dBA grows by 3--11\% relative to current practice at 40 AC/hr. Fuel optimization on a fixed glide path therefore carries a small ground-noise penalty. The penalty is larger for the CDA family than for the DDA family, so delayed deceleration attains its fuel saving close to the Baseline footprint on this metric. Deployment decisions should weigh fuel against noise rather than assume that the two improve together.

\subsection{Wind Uncertainty at the Aircraft and Fleet Levels}\label{subsec:disc_wind}

Wind is the largest single influence on an individual descent and one of the smallest influences on the fleet metrics. The per-aircraft sensitivity confirms what \citet{park2016vertical} found for a single descent, that altitude-dependent winds are a first-order input to the vertical profile. The fleet-level cancellation is the observation this study adds. Each aircraft re-anchors its descent plan to the wind it observes at the metering gate (\Cref{subsec:case1_wind}), so the residual wind effect is a per-aircraft perturbation that averages out across the 5 to 46 aircraft of a scenario. A demand realization, by contrast, shifts the whole scenario through stream bunching and the weight-class mix. The error bars in \Cref{fig:sweep_metrics,fig:case2_sweep_metrics} therefore mainly measure demand uncertainty, and wind must be handled inside the descent plan rather than at the fleet level.

The same structure explains why the stabilized-approach chance constraint never became active. With the wind observed and the plan re-anchored, stabilization held as a hard constraint at every quadrature node in both case studies, and only the wide-body DDA lost designs at the strongest tailwind nodes. A static procedure design faces a harder problem: a single wind-blind design committed against the climatology can guarantee stabilization only in probability, and the published flows produce infeasible tail cells for such a design. The value of a per-aircraft design menu would grow further if wind observation were delayed.

\subsection{Implications for Automation}\label{subsec:disc_automation}

The framework moves the expensive computation offline. The design lattice is evaluated once per aircraft type, architecture, design, and wind node (\Cref{subsec:properties}), and the fleet layer reads the cache. Per-aircraft commitment reduces to enumeration over at most 16 designs, with a bisection for the smallest feasible extension, and each aircraft is touched once. Case Study~2 reuses the Case-1 cache without new simulation because the two studies share the metering-gate condition. An operational implementation would follow the same split: the lattice would be certified per fleet type in advance, and a lookup-driven scheduler would run as aircraft cross the metering ring.

The single-pass commitment also produces a usable output. For each aircraft, the scheduler emits a glideslope-capture distance, flap trigger speeds, and an extension distance. These quantities map onto instructions a controller can issue and a flight management system can accept, which is why the framework schedules simulated trajectories rather than travel-time envelopes.

The difference from a target landing time matters most when no crew reconciles the ground plan with the aircraft. A target time leaves the aircraft to decide how to meet it, and a vector rests on the assumption that the aircraft can comply; pilots check both today. In an autonomous arrival, the check must be part of the commitment. The framework therefore verifies every candidate against the stabilized-approach criteria before selection, and the committed object names the configuration changes as well as the times and speeds. The same property makes the commitment auditable, because every selected trajectory exists in the cache with its wind node and stabilization verdict. Taken with \cite{pang2026trajectory}, the framework covers the ground assignment and airborne execution layers in one place.

\subsection{Limitations and Future Studies}\label{subsec:disc_limitations}

The limitations of this study also mark out the extensions we consider most useful, so we state each limitation together with the study it motivates.

First, the vector segment downstream of the metering fix is a tangent leg, an RF arc, and a centerline extension, in parity with the vector-to-final model of the lateral study. It routes the far-side flows on a direct diagonal to the intercept instead of along the charted downwind, which is 20--25 nmi longer, and it omits the flow-specific step-downs upstream of the intercept because it replaces the charted transition there. An explicit model would raise far-side fuel, reduce far-side delay authority, and prune more of the lattice than we report. Repeating Case Study 2 with the charted downwind and the step-down restrictions in place would price the extension on the path the aircraft actually flies and complete the cost of the published structure.

Second, per-flow demand is uniform across the six flows, whereas real A80 demand is heavier from the north and the west. Because the track-mile price of extension differs by a factor of two between the near-side and the far-side flows, an asymmetric pattern would change how much extension the scheduler can find, in a direction that depends on which flows carry the load. Replaying recorded per-flow demand, with recorded winds in place of the climatology quadrature, would test whether the near-side flows retain enough delay authority when they also carry most of the traffic.

Third, the wind-observable setting studied here is the favorable limit, since each aircraft commits against the wind it observes at the gate. A static design problem, in which a single wind-blind procedure is published against the climatology, would turn the stabilized-approach requirement into a chance constraint with real probabilistic content, and \Cref{subsec:disc_wind} indicates that the published flows would produce infeasible tail cells for such a design. Comparing the per-aircraft design menu against a single robust published design would quantify the value of observing the wind before commitment.

Fourth, the architecture is fixed per experiment, so the fleet is homogeneous in CDA or DDA, and same-flow non-overtaking precedence is not enforced, in parity with the lateral study. A mixed fleet is the case an operator would face first, because delayed-deceleration capability will not arrive at every aircraft at once. The speed differential between a clean arrival at 240 kt and a configured arrival at 170 kt on the same final approach course raises a spacing problem that the homogeneous-fleet formulation does not address.

Fifth, the noise model interpolates noise-power-distance curves from the EUROCONTROL database at a fixed engine state, substitutes curves for airframes without a native entry, and omits the airframe-noise increment of late flap deployment, so the reported footprints isolate the geometric effect of altitude and path and are not certification-grade contours. A configuration-aware noise model would let the acoustic effect of the flap schedule enter the fuel-noise comparison of \Cref{subsec:disc_matched}.

Lastly, all delay is absorbed inside the terminal area. There is no holding variable, any residual shortfall is recorded as separation slack rather than pushed upstream, and in practice the center descent controller meters the inbound flow before it reaches the terminal boundary, so coupling the terminal co-optimization with that upstream metering is a natural extension. A runtime characterization in the manner of the lateral study would complete the operational picture by separating the offline cost of certifying the lattice per fleet type from the online per-aircraft enumeration, since only the latter has to keep pace with the arrival stream.

\section{Conclusion}\label{sec:conclusion}

This paper formulates terminal arrival management as a four-dimensional simulation-in-the-loop co-optimization problem where each aircraft's vertical descent design and lateral path extension are committed together, extending a previous work of lateral scheduler that optimized only controller-issued actions \cite{pang2026trajectory} and a companion study that optimized and verified the single-aircraft descent designs in physics-based simulation \cite{pang2026optimal}. The vertical decision is a finite lattice over glideslope-capture distance and flap-deployment trigger speeds, and each lattice point is evaluated offline by a wind-aware backward plan and a six-degree-of-freedom forward simulation that returns descent time, fuel burn, minimum descent track length, and stabilized-approach feasibility. Minimum descent track length couples the vertical design to the lateral path, so the scheduler absorbs required delay inside the descent instead of on level track miles. A rolling-horizon forward pass commits each aircraft once and resolves safety, fuel, and delay in strict priority order under first-on-final-first-serve sequencing with bounded position shifting and pair-specific wake separation, and online scheduling reduces to table lookups against the simulation cache.

Two Atlanta Large TRACON case studies quantify the benefit of the proposed approach. In a free-descent environment, the co-optimized continuous descent saves about 15\% of fleet fuel below saturation and delayed deceleration saves about 23\%, and bounded position shifting with two positions of freedom holds these savings nearly flat through saturation while cutting mean separation shortfall by roughly a factor of four. On the six published Runway 8L arrival flows, charted final-approach floors remove 31\% of the design lattice, reduce the continuous-descent saving to 9--10\%, and leave the delayed-deceleration saving at 20--21\%; relaxing the floors prices them at 1.6--6.7\% of fleet fuel depending on architecture and demand. 

On a matched $3.0^\circ$ final, the delayed-deceleration advantage over an optimized continuous descent is 8--18\% and tracks clean-to-approach speed range, and the two architectures produce ground-noise footprints within a few percent of each other, which indicates that quieter descents reported in the delayed-deceleration literature come substantially from the steeper final flown with that procedure. Wind changes individual-descent fuel by 34--81\% across the climatology but explains at most 4\% of fleet-metric variance, and delay authority concentrates in the near-side flows.

For each aircraft, the framework commits a glideslope-capture distance, a flap trigger schedule, and an extension distance, each verified in simulation against stabilized-approach criteria at the observed wind. In operations, a flight management system can execute this object directly, and ground automation can check it without a crew in the loop.

\section*{Acknowledgment}
This work was supported by the National Aeronautics and Space Administration (NASA) University Leadership Initiative (ULI) program under project ``Autonomous Aerial Cargo Operations at Scale,'' via grant No.~80NSSC21M071 to the University of Texas at Austin. Any opinions, findings, conclusions, or recommendations expressed in this material are those of the authors and do not necessarily reflect the views of the project sponsor. The authors also thank Dr. Jiacheng Xie of the Georgia Institute of Technology, an instrument-rated pilot, and Jim Allerdice, the chief designer of the Area Navigation (RNAV) infrastructure at A80 TRACON, for helpful discussions regarding charted procedures at KATL. The authors also thank EUROCONTROL for providing the ANP data.

\appendix

\section{Monotonicity Lemmas}\label{app:lemmas}

This appendix states the two monotonicity results used in \Cref{subsec:geometry,subsec:properties}. \Cref{lem:monotone_D} concerns the lateral map \eqref{eq:Dtot} alone, and \Cref{lem:monotone_tf} transfers its monotonicity to the FAF arrival time and the total fuel through the linking equations \eqref{eq:time_link} and \eqref{eq:fuel_link}.

\begin{lemma}[Monotone track distance]\label{lem:monotone_D}
For every entry point and turn direction in this study, $D_i$ is continuously differentiable and strictly increasing on $[0, d_{\max}]$. On a uniform grid of $2001$ points its derivative satisfies $0.28 \le D_i'(d_i) \le 1.98$ for the four corner geometries of Case Study 1 and the six published flows of Case Study 2.
\end{lemma}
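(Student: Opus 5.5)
The plan is to differentiate the closed form \eqref{eq:turncenter}--\eqref{eq:Dtot} directly in $d_i$. Then I would certify the sign of the derivative by combining the $2001$-point grid evaluation with an explicit bound on the second derivative.

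\textbf{Derivative of the three pieces.} By \eqref{eq:turncenter}, $\partial c_i/\partial d_i = (-1,0)$, so $\partial v_i/\partial d_i = (1,0)$. This gives $\rho_i' = v_{i,x}/\rho_i$ and, from \eqref{eq:tangentleg}, $d_{L,i}' = \rho_i\rho_i'/d_{L,i} = v_{i,x}/d_{L,i}$. For the arc I would avoid differentiating the atan2 form \eqref{eq:arcangle} blindly. Instead I would write the polar angle of $u_i$ as $\phi_i \mp \beta_i$, where $\phi_i$ is the polar angle of $v_i$ and $\beta_i = \arccos(r/\rho_i)$ is the angle between $v_i$ and the radius to the tangent point; this is exactly the decomposition encoded in \eqref{eq:radiusvec}. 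The end radius $-s_i r\hat y$ does not depend on $d_i$. Hence $\theta_i' = \pm\phi_i' \mp \beta_i'$, with
\[
\phi_i' = -\frac{v_{i,y}}{\rho_i^2}, \qquad \beta_i' = \frac{r\,\rho_i'}{\rho_i\, d_{L,i}},
\]
and signs fixed by $s_i$. Summing gives $D_i' = 1 + v_{i,x}/d_{L,i} + r\theta_i'$ as an explicit rational-algebraic expression in $(x_i, y_i, x_F, s_i, r, d_i)$.

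\textbf{Continuous differentiability.} Two conditions make the expression above well defined and continuous. First, $\rho_i > r$ must hold on all of $[0,d_{\max}]$; I would check this via the minimum over $d_i$ of $\lVert p_i - c_i(d_i)\rVert$, which is the distance from $p_i$ to a horizontal segment of centers and is attained in closed form. Second, the atan2 in \eqref{eq:arcangle} must stay off its branch cut, that is, $\theta_i \in (0,\pi)$ strictly, so that $\lvert u_{i,x}\rvert$ is differentiable ($u_{i,x}\neq 0$). Both are finite checks over the ten geometries (four corners of Case Study~1 and six flows of Case Study~2), using the containment geometry and \Cref{tab:case2_flows}.

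\textbf{Strict monotonicity and the main obstacle.} The grid evaluation only certifies $0.28 \le D_i' \le 1.98$ at the $2001$ nodes, and I expect extending this to the whole interval to be the main obstacle. I would differentiate once more to get $D_i''$ in terms of $\rho_i$, $d_{L,i}$ and $v_i$. I would then bound $\lvert D_i''\rvert \le M$ on the interval using $\rho_i \ge \rho_{\min}$, $d_{L,i} \ge d_{L,\min}$ and $\lVert v_i\rVert \le \rho_{\max}$ from the previous step. The grid spacing is $h = d_{\max}/2000 \le 0.014$~nmi, so between nodes $D_i' \ge 0.28 - Mh/2$. Positivity then follows once $M < 40$, which should hold comfortably because the entries lie near the $30$~nmi ring, far from the turn circle. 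If that crude bound fails for a near-side flow with small $d_{L,i}$, I would refine the grid locally or bound $D_i''$ piecewise on subintervals. The upper value $1.98$ is reported as the grid maximum and is not needed for invertibility.
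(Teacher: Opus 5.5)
Your argument is correct, and it supplies something the paper does not. The paper gives no separate proof of this lemma. Its only support is the closed form \eqref{eq:turncenter}--\eqref{eq:Dtot} together with the evaluation of $D_i'$ on the $2001$-point grid quoted in the statement, and smoothness and strict monotonicity on all of $[0,d_{\max}]$ are read off from those grid values. A grid check alone cannot rule out a dip of $D_i'$ between nodes. Your domain conditions ($\rho_i>r$, $u_{i,x}\neq 0$) and the bound on $D_i''$ are what turn the numerical observation into a certificate, at the cost of the bookkeeping for $D_i''$.

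You can drop that step by carrying your own algebra one line further. On the branch $u_{i,x}<0$, where \eqref{eq:arcangle} returns the arc actually flown (turns below $\pi$), the swept angle is $\theta_i=\text{const}-s_i\phi_i-\beta_i$. Hence $\theta_i'=-s_i\phi_i'-\beta_i'$; note that the coefficient of $\beta_i'$ does not flip with $s_i$, so fix your $\pm/\mp$ accordingly. Using $1-r^2/\rho_i^2=d_{L,i}^2/\rho_i^2$ this gives
\[
D_i'(d_i)=1+\frac{v_{i,x}\,d_{L,i}+s_i r\,v_{i,y}}{\rho_i^2}=1-\hat t_{i,x},
\qquad
\hat t_i=\frac{u_i-v_i}{d_{L,i}},
\]
where $\hat t_i$ is the unit direction of the tangent leg. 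Thus $0\le D_i'\le 2$ identically. Either equality forces $\hat t_i=\pm\hat x$, which puts the tangent point at the top or bottom of the turn circle and so gives $u_{i,x}=0$. Therefore $0<D_i'<2$ throughout once $u_{i,x}<0$ on the interval.

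That condition is also cheap to check. Since $v_{i,y}=y_i-s_i r$ does not depend on $d_i$, and $u_{i,x}=0$ forces $\lvert v_{i,y}\rvert=r$, $u_{i,x}$ keeps one sign on the whole interval unless $y_i\in\{0,\,2s_i r\}$. In addition, $\rho_i\ge\lvert v_{i,y}\rvert$ settles $\rho_i>r$ whenever $\lvert v_{i,y}\rvert>r$.

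With this identity, only the constants $0.28$ and $1.98$ still come from the grid, and the identity explains them. Far-side tangent legs fly nearly against the final course ($\hat t_{i,x}\approx-1$, so $D_i'\approx 2$). The small values occur where a leg points partly along it.
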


\begin{lemma}[Monotone time and fuel]\label{lem:monotone_tf}
For a fixed design $\chi$, both the FAF arrival time \eqref{eq:time_link} and the total fuel \eqref{eq:fuel_link} are continuous and strictly increasing in the extension $d$ on the feasible interval $[d_{\mathrm{floor}}(\chi), d_{\max}]$, and their common derivative structure is
\begin{equation}
\frac{\partial t_i}{\partial d}
  = \frac{3600}{V^{\mathrm{gs}}_i}\, D_i'(d),
\qquad
\frac{\partial F_i}{\partial d}
  = K_\gamma \frac{V^{\mathrm{TAS}}_\gamma}{V^{\mathrm{gs}}_i}\, D_i'(d),
\label{eq:mono_derivs}
\end{equation}
so that $\partial F_i/\partial d = (K_\gamma V^{\mathrm{TAS}}_\gamma/3600)\,\partial t_i/\partial d$.
\end{lemma}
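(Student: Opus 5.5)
The plan is to use the linking equations \eqref{eq:time_link} and \eqref{eq:fuel_link} to reduce the lemma to \Cref{lem:monotone_D}. Fix the aircraft $i$, its type $\gamma$, the architecture $a$, the observed node $w_i$, and the design $\chi$. Everything in those two equations other than $\Delta_i$ then comes from the cache or from the level-cruise calibration and is independent of $d$. This covers $\tau_i$, $t^{\mathrm{des}}_i$, $F^{\mathrm{des}}_i$, $\ul D^{a}_{\gamma}(\chi,w_i)$, $K_\gamma$, and $V^{\mathrm{TAS}}_\gamma$. It also covers the ground speed $V^{\mathrm{gs}}_i = V^{\mathrm{TAS}}_\gamma + w_{L,i}$, because $w_{L,i}$ is the projection of the scenario wind onto the tangent-leg heading at the entry altitude. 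I will make this explicit as the modelling convention of \Cref{subsec:wind}: the projection is fixed by the entry geometry and the wind node and is not re-evaluated as $d$ moves the turn center. If the heading were allowed to vary with $d$, that would be the one place where the argument needs more care; see the last paragraph.

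With this in place, $d$ enters only through the surplus $\Delta_i(d) = D_i(d) - \ul D^{a}_\gamma(\chi,w_i)$. The two linking equations are therefore affine in $\Delta_i$,
\begin{equation*}
t_i = c_t + \frac{3600}{V^{\mathrm{gs}}_i}\,\Delta_i(d), \qquad
F_i = c_F + K_\gamma\frac{V^{\mathrm{TAS}}_\gamma}{V^{\mathrm{gs}}_i}\,\Delta_i(d),
\end{equation*}
where the constants $c_t$ and $c_F$ do not depend on $d$. By \Cref{lem:monotone_D}, $D_i$ is $C^1$ on $[0,d_{\max}]$, so $\Delta_i$ is $C^1$ with $\Delta_i' = D_i'$. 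The chain rule then gives \eqref{eq:mono_derivs} directly. Dividing the second derivative by the first cancels $D_i'(d)/V^{\mathrm{gs}}_i$ and leaves the proportionality $\partial F_i/\partial d = (K_\gamma V^{\mathrm{TAS}}_\gamma/3600)\,\partial t_i/\partial d$.

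Strict monotonicity requires positive coefficients, and I will check their signs here. $K_\gamma>0$ and $V^{\mathrm{TAS}}_\gamma>0$ are physical calibration constants. $V^{\mathrm{gs}}_i>0$ holds because $|w_{L,i}|\le\bar w=20$--$25$ kt, which is far below the roughly 280--300 kt true airspeed at 10{,}000 ft and 240 KCAS. \Cref{lem:monotone_D} gives $D_i'(d)\ge 0.28>0$ on the whole interval. Both derivatives are therefore strictly positive. This yields strict increase on $[0,d_{\max}]$, and in particular on the subinterval $[d_{\mathrm{floor}}(\chi),d_{\max}]$. That subinterval is well defined and unique because $D_i$ is continuous and strictly increasing, so $d_{\mathrm{floor}}$ is the unique preimage of $\ul D$ when it lies in range, or $0$ when $D_i(0)\ge\ul D$. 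Continuity follows from the differentiability already shown.

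The main obstacle is that \Cref{lem:monotone_D} is certified numerically on a 2001-point grid, so the bound $D_i'\ge 0.28$ is not established pointwise between grid points. To close this gap I would bound $D_i''$ on the interval, which is explicit from \eqref{eq:tangentleg}--\eqref{eq:arcangle} and smooth since $\rho_i>r$. A first-order Taylor estimate over a grid step $h=d_{\max}/2000$ then gives $D_i' \ge 0.28 - h\sup|D_i''| > 0$. Alternatively, I would simply take \Cref{lem:monotone_D} as stated, since the paper asserts strict increase as part of that lemma. The frozen-$V^{\mathrm{gs}}_i$ convention is the other point to state carefully. If the heading did depend on $d$, an extra term $-3600\,\Delta_i\,\partial_d V^{\mathrm{gs}}_i/(V^{\mathrm{gs}}_i)^2$ would appear, and the proportionality identity would fail.
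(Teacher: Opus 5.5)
Your proposal is correct and follows the paper's proof. Both arguments write $t_i$ and $F_i$ as affine functions of $\Delta_i(d)=D_i(d)-\ul D$ with positive, $d$-independent slopes $3600/V^{\mathrm{gs}}_i$ and $K_\gamma V^{\mathrm{TAS}}_\gamma/V^{\mathrm{gs}}_i$, invoke \Cref{lem:monotone_D}, and differentiate; your explicit requirement that $V^{\mathrm{gs}}_i$ not vary with $d$, which the paper uses only implicitly, is a useful clarification rather than a different route.
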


\begin{proof}
The surplus distance is $\Delta_i(d) = D_i(d) - \ul D$, which is strictly increasing by \Cref{lem:monotone_D} and nonnegative on $[d_{\mathrm{floor}}(\chi), d_{\max}]$ by the definition of $d_{\mathrm{floor}}$. Both $t^{\mathrm{des}}$ and $F^{\mathrm{des}}$ are constants of the design at the observed node, so \eqref{eq:time_link} and \eqref{eq:fuel_link} are affine in $\Delta_i$ with positive slopes $3600/V^{\mathrm{gs}}_i$ and $K_\gamma V^{\mathrm{TAS}}_\gamma/V^{\mathrm{gs}}_i$. Differentiating gives \eqref{eq:mono_derivs}, and the ratio of the two slopes is independent of $d$.
\end{proof}

\section{Structural Propositions}\label{app:props}

This appendix states and proves the structural results of \Cref{subsec:properties}. Throughout, fix an aircraft $i$ of type $\gamma$, an architecture $a$, and an observed wind node $w_i$, with the cached quantities of \Cref{subsec:properties} and the stabilized menu $\mc S = \{\chi \in \mc X^{a}_{\gamma} : s(\chi, w_i) = 1\}$.

\begin{prop}[Exact per-aircraft commitment]\label{prop:exact}
Let $\mc S = \{\chi \in \mc X^{a}_\gamma : s(\chi, w_i) = 1\}$ be the stabilized menu, and for $\chi \in \mc S$ define
\begin{equation}
d_{\mathrm{floor}}(\chi) = \min\{d \in [0, d_{\max}] :
  D_i(d) \ge \ul D\}, \quad
d_\chi = \min\{d \ge d_{\mathrm{floor}}(\chi) :
  t_i(\chi, d) \ge t^{\mathrm{req}}_i\},
\label{eq:dchi}
\end{equation}
with $d_\chi = d_{\max}$ when the second set is empty. Then $(\chi, d_\chi)$ is the fuel-minimal feasible commitment for design $\chi$, both minima in \eqref{eq:dchi} are attained and unique, and each is computed by bisection on a strictly increasing continuous function. The optimum of \eqref{eq:joint_commit} is attained at $\arg\min_{\chi \in \mc S} F_i(\chi, d_\chi)$, so enumerating $|\mc S| \le |\mc X^{a}_\gamma|$ designs and bisecting twice per design solves the problem exactly.
\end{prop}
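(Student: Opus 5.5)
The plan is to fix a design $\chi \in \mc S$ and reduce the continuous part of \eqref{eq:joint_commit} to a one-dimensional monotone problem. Then we take the outer minimum over the finite menu. Since $s(\chi,w_i)$ does not depend on $d$, the constraint $s=1$ simply restricts $\chi$ to $\mc S$. For fixed $\chi$, the remaining constraints are $D_i(d)\ge \ul D$ and $t_i(\chi,d)\ge t^{\mathrm{req}}_i$ with $d\in[0,d_{\max}]$, and the objective is $F_i(\chi,d)$.

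\textbf{The floor.} By \Cref{lem:monotone_D}, $D_i$ is continuous and strictly increasing on $[0,d_{\max}]$, so $\{d : D_i(d)\ge \ul D\}$ is a closed interval $[d_{\mathrm{floor}}(\chi),d_{\max}]$. Its left endpoint is attained and unique. If $D_i(0)\ge\ul D$, it equals $0$. Otherwise it is the unique root of $D_i(d)-\ul D$, which bisection finds because the function is continuous and changes sign. The case in which even $D_i(d_{\max})<\ul D$ makes $\chi$ infeasible, and I would state that such designs are dropped from $\mc S$. I expect this to be the point needing the most care, since the statement implicitly assumes the set is nonempty.

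\textbf{The time constraint.} On $[d_{\mathrm{floor}}(\chi),d_{\max}]$, \Cref{lem:monotone_tf} gives that $t_i(\chi,\cdot)$ is continuous and strictly increasing. The set $\{d : t_i(\chi,d)\ge t^{\mathrm{req}}_i\}$ is therefore either empty or a closed interval $[d_\chi,d_{\max}]$. The endpoint $d_\chi$ is attained by continuity and is unique by strict monotonicity. It equals $d_{\mathrm{floor}}$ if the constraint already holds there, and otherwise it is the unique root found by a second bisection. Because $F_i(\chi,\cdot)$ is also strictly increasing on the same interval, again by \Cref{lem:monotone_tf}, the left endpoint $d_\chi$ is the unique fuel minimizer over the feasible $d$ for design $\chi$.

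\textbf{The empty case.} When the time set is empty, \eqref{eq:joint_commit} has no feasible $d$ for $\chi$. Following the convention of \Cref{alg:commit}, we set $d_\chi=d_{\max}$ with slack $\sigma_\chi>0$. Here I would argue that $d_{\max}$ minimizes the shortfall $t^{\mathrm{req}}_i-t_i(\chi,d)$, again by monotonicity of $t_i$. So $(\chi,d_{\max})$ is the slack-minimal commitment for that design, which is the sense in which it is optimal under the priority order.

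\textbf{Assembling the optimum.} The feasible set of \eqref{eq:joint_commit} is the disjoint union over $\chi\in\mc S$ of the per-design feasible intervals. Its minimum therefore equals $\min_{\chi\in\mc S}\min_d F_i(\chi,d)=\min_{\chi\in\mc S}F_i(\chi,d_\chi)$. This minimum is attained because $\mc S$ is finite. Enumerating $|\mc S|\le|\mc X^a_\gamma|$ designs with two bisections each yields it exactly, up to the bisection tolerance.

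The main obstacle is handling the boundary and empty cases cleanly, so that "attained and unique" and "exact" are literally true. Strict monotonicity alone settles uniqueness; the delicate part is the convention $d_\chi=d_{\max}$ when no $d$ meets $t^{\mathrm{req}}_i$.
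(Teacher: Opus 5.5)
Your proposal follows the paper's proof essentially step for step. Both sets in \eqref{eq:dchi} are superlevel sets of continuous, strictly increasing functions, hence intervals $[d^\ast, d_{\max}]$ whose left endpoints are attained, unique and found by bisection. Any admissible $d$ for a fixed $\chi$ satisfies $d \ge d_\chi$, so strict monotonicity of $F_i(\chi,\cdot)$ from \Cref{lem:monotone_tf} makes $d_\chi$ fuel-minimal. The outer problem is then a finite comparison over $\mc S$. Your explicit handling of an empty floor set and of the bisection tolerance is more careful than the paper, which assumes both sets are nonempty.

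The one loose point is your final equality $\min_{\chi}\min_d F_i(\chi,d) = \min_{\chi\in\mc S}F_i(\chi,d_\chi)$. The paper shares it, since its proof counts $d = d_{\max}$ as admissible even when $t_i(\chi,d_{\max}) < t^{\mathrm{req}}_i$. A design assigned $d_\chi = d_{\max}$ by convention is infeasible for \eqref{eq:joint_commit} and can still undercut the feasible designs in fuel. The $\arg\min$ should therefore run over zero-slack designs only, and when none exist, slack should be resolved first as in \Cref{prop:lex}.
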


\begin{proof}
Both sets in \eqref{eq:dchi} are the superlevel sets of continuous strictly increasing functions of $d$ by \Cref{lem:monotone_D,lem:monotone_tf}, so each is a closed interval of the form $[d^\ast, d_{\max}]$ and its minimum is attained and unique. Bisection converges to $d^\ast$ because the defining inequality is monotone in $d$. Fix $\chi$ and let $d$ be any feasible extension for it, so $d \ge d_{\mathrm{floor}}(\chi)$ and either $t_i(\chi, d) \ge t^{\mathrm{req}}_i$ or $d = d_{\max}$. In both cases $d \ge d_\chi$, and $F_i(\chi, \cdot)$ is strictly increasing by \Cref{lem:monotone_tf}, hence $F_i(\chi, d) \ge F_i(\chi, d_\chi)$ with equality only at $d = d_\chi$. The feasible set of \eqref{eq:joint_commit} is therefore the finite set $\{(\chi, d_\chi) : \chi \in \mc S\}$ up to fuel-dominated points, and minimizing over it is a finite comparison.
\end{proof}

For \Cref{prop:lex}, write $\ol F$ and $\ul F$ for the largest and smallest total fuel over the stabilized menu at $d_{\max}$ and at the feasibility floor respectively, and $\ol t$, $\ul t$ for the corresponding arrival times.

\begin{prop}[Lexicographic equivalence]\label{prop:lex}
Suppose the weights satisfy
\begin{equation}
W_{\mathrm{safe}} > \frac{W_{\mathrm{fuel}}(\ol F - \ul F)
  + W_{\mathrm{delay}}(\ol t - \ul t)}{\varepsilon_\sigma},
\qquad
W_{\mathrm{fuel}} > \frac{W_{\mathrm{delay}}(\ol t - \ul t)}
  {\varepsilon_F},
\label{eq:weightcond}
\end{equation}
where $\varepsilon_\sigma$ and $\varepsilon_F$ are the smallest nonzero gaps between attainable slack values and between attainable fuel values over the menu. Then the minimizer of the weighted per-aircraft objective in \Cref{alg:commit} coincides with the lexicographic minimizer of $(\sigma_\chi,\, F_i(\chi, d_\chi),\, t_i(\chi, d_\chi))$.
\end{prop}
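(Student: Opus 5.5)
We argue by a standard exchange argument on the finite candidate set. By \Cref{prop:exact}, the weighted objective in \Cref{alg:commit} is evaluated only at the finitely many points $(\chi, d_\chi)$, $\chi \in \mc S$. Each candidate carries a triple $(\sigma_\chi, F_\chi, t_\chi)$ with $F_\chi = F_i(\chi,d_\chi)$ and $t_\chi = t_i(\chi,d_\chi)$. Let $\chi^\ast$ be the lexicographic minimizer and $\chi$ any other candidate whose triple is lexicographically larger. It suffices to show that $\Phi(\chi) - \Phi(\chi^\ast) > 0$, where $\Phi = W_{\mathrm{safe}}\sigma + W_{\mathrm{fuel}}F + W_{\mathrm{delay}}t$. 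Ties in all three components give equal $\Phi$, so both minimizers are then the same set up to identical triples.

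First we bound the ranges. Because $d_{\mathrm{floor}}(\chi) \le d_\chi \le d_{\max}$, \Cref{lem:monotone_tf} gives $F_i(\chi,d_{\mathrm{floor}}) \le F_\chi \le F_i(\chi,d_{\max})$, and likewise for $t$. Hence every $F_\chi$ lies in $[\ul F, \ol F]$ and every $t_\chi$ in $[\ul t, \ol t]$, with these extremes read as in the definitions preceding the statement. For any two candidates this yields $|F_\chi - F_{\chi^\ast}| \le \ol F - \ul F$ and $|t_\chi - t_{\chi^\ast}| \le \ol t - \ul t$.

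We then split into cases by the first component in which the triples differ.

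\emph{Case 1: $\sigma_\chi > \sigma_{\chi^\ast}$.} Since $\varepsilon_\sigma$ is the smallest nonzero slack gap, $\sigma_\chi - \sigma_{\chi^\ast} \ge \varepsilon_\sigma$. Therefore
\[
\Phi(\chi) - \Phi(\chi^\ast) \ge W_{\mathrm{safe}}\varepsilon_\sigma - W_{\mathrm{fuel}}(\ol F - \ul F) - W_{\mathrm{delay}}(\ol t - \ul t),
\]
which is positive by the first inequality in \eqref{eq:weightcond}.

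\emph{Case 2: equal slack, $F_\chi > F_{\chi^\ast}$.} Here $F_\chi - F_{\chi^\ast} \ge \varepsilon_F$, so
\[
\Phi(\chi) - \Phi(\chi^\ast) \ge W_{\mathrm{fuel}}\varepsilon_F - W_{\mathrm{delay}}(\ol t - \ul t) > 0
\]
by the second inequality.

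\emph{Case 3: equal slack and fuel, $t_\chi > t_{\chi^\ast}$.} Then $\Phi(\chi) - \Phi(\chi^\ast) = W_{\mathrm{delay}}(t_\chi - t_{\chi^\ast}) > 0$ because $W_{\mathrm{delay}} > 0$.

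Combining the three cases, $\chi^\ast$ strictly beats every lexicographically larger candidate, so it is the weighted minimizer. Conversely, any weighted minimizer must be lexicographically minimal.

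The step most likely to cause trouble is the range bound. We must confirm that $\ol F, \ul F, \ol t, \ul t$, as defined over the stabilized menu at $d_{\max}$ and at the floor, really bracket all attained values $F_\chi$ and $t_\chi$. This relies on the per-design monotonicity of \Cref{lem:monotone_tf}, taking the maximum over $\chi$ at $d_{\max}$ and the minimum over $\chi$ at $d_{\mathrm{floor}}(\chi)$.

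A second subtlety is that $\varepsilon_\sigma$ and $\varepsilon_F$ exist and are positive. This holds because the menu is finite, so only finitely many slack and fuel values are attainable. If all values coincide, the corresponding case is vacuous.
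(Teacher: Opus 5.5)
Your proposal is correct and follows the paper's own argument: a finite candidate set $\{(\chi, d_\chi)\}$, pairwise comparison by the first component in which the triples differ, and the two inequalities in \eqref{eq:weightcond} making $W_{\mathrm{safe}}\varepsilon_\sigma$ and $W_{\mathrm{fuel}}\varepsilon_F$ dominate the largest possible opposing contributions. You also justify explicitly, through \Cref{lem:monotone_tf} and $d_{\mathrm{floor}}(\chi) \le d_\chi \le d_{\max}$, that $\ol F - \ul F$ and $\ol t - \ul t$ bound every pairwise fuel and time difference, and you handle ties; the paper leaves both of these implicit.
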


\begin{proof}
The candidate set is finite by \Cref{prop:exact}, so the attainable values of each term form a finite set and the gaps $\varepsilon_\sigma, \varepsilon_F > 0$ exist. If two candidates differ in slack, the difference is at least $\varepsilon_\sigma$, and the first inequality in \eqref{eq:weightcond} makes $W_{\mathrm{safe}}\varepsilon_\sigma$ exceed the largest possible opposing contribution of the fuel and delay terms, so the smaller-slack candidate wins. Among candidates with equal slack the same argument applies to the fuel term with the second inequality, and among candidates equal in slack and fuel the delay term decides.
\end{proof}

For \Cref{prop:ws}, let $\boldsymbol w = (w_1,\dots,w_N)$ be the vector of per-aircraft wind nodes, drawn independently from the discretized climatology, and let $J^\star(\boldsymbol w)$ be the optimal value of \eqref{eq:fleet_problem} at that realization.

\begin{prop}[Wait-and-see interpretation]\label{prop:ws}
Because every decision variable of \eqref{eq:fleet_problem} is chosen after the corresponding wind node is observed, the reported fleet metrics estimate $\mathbb E_{\boldsymbol w}[J^\star(\boldsymbol w)]$, which is the wait-and-see value of the stochastic program. For any wind-blind design rule $\pi$ that must commit $\chi_i$ before observing $w_i$,
\begin{equation}
\mathbb E_{\boldsymbol w}\bigl[J^\star(\boldsymbol w)\bigr]
  \;\le\; \mathbb E_{\boldsymbol w}\bigl[J^\pi(\boldsymbol w)\bigr].
\label{eq:wsbound}
\end{equation}
\end{prop}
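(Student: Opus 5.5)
The argument is the standard wait-and-see versus here-and-now comparison. The point is that, realization by realization, a wind-blind rule can do no better than the wind-aware optimum. I would prove the inequality pointwise in $\boldsymbol w$ and then take expectations. First I will fix what $J^\pi(\boldsymbol w)$ means: the rule $\pi$ picks each $\chi_i$ as a measurable function of information excluding $w_i$ (entry data, type, and possibly the climatology). The extension $d_i$, the order $\Pi$, and the slacks may still be chosen as in \Cref{alg:commit}. $J^\pi(\boldsymbol w)$ is then the value of objective \eqref{eq:fp_obj} attained by that plan once $\boldsymbol w$ is realized.

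The key step is pointwise feasibility. For every realization $\boldsymbol w$, the tuple $(\Pi^\pi, \{\chi_i^\pi, d_i^\pi\}, \{\sigma_k^\pi\})$ that $\pi$ produces must be a feasible point of program \eqref{eq:fleet_problem} at that $\boldsymbol w$. Lattice membership, the box bounds \eqref{eq:fp_box}, and the permutation constraint \eqref{eq:fp_perm} hold by construction. The separation constraint \eqref{eq:fp_sep} with \eqref{eq:fp_slack} can always be met by setting $\sigma_k$ to the realized shortfall, which is the complete-recourse property noted in \Cref{subsec:wind}. The floor \eqref{eq:fp_floor} is satisfied by choosing $d_i \ge d_{\mathrm{floor}}$ after the fact, or by treating a violation as infeasibility with cost $+\infty$. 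Given feasibility, the optimality of $J^\star(\boldsymbol w)$ yields $J^\star(\boldsymbol w) \le J^\pi(\boldsymbol w)$ for every $\boldsymbol w$. Because the climatology is discretized on finitely many nodes and drawn independently, the expectation is a finite weighted sum with weights $\prod_i \omega_{n_i}$. Monotonicity of that sum then gives \eqref{eq:wsbound}.

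For the first claim, that the reported metrics estimate $\mathbb E_{\boldsymbol w}[J^\star(\boldsymbol w)]$, I would argue from the information structure. Each scenario draws $\boldsymbol w$ from the quadrature, and the scheduler commits after observing it. The scenario average of the attained objective is therefore an unbiased, common-random-numbers estimate of the expectation of the per-realization value. There is a caveat: \Cref{alg:commit} solves \eqref{eq:fleet_problem} exactly only for a fixed order, by \Cref{prop:exact,prop:lex}. So I would state the estimate for the order-restricted optimum, or read $J^\star$ as the value attained by the policy. The inequality still holds as long as $\pi$ is compared under the same ordering rule.

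The main obstacle is the stabilization constraint \eqref{eq:fp_stab}. A wind-blind design may have $s(\chi_i^\pi, w_i) = 0$ at some node, and then $\pi$'s plan is infeasible for \eqref{eq:fleet_problem}. I would handle this by assigning $J^\pi = +\infty$ on that event, which keeps the inequality trivially true. Alternatively, if $\pi$ is restricted to designs stabilized at every node, it is exactly feasible. I would present the $+\infty$ convention and note that it is what makes a single published design's guarantee only probabilistic.
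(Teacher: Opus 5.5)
Your proof is correct and is essentially the paper's argument: the wind-blind plan is a feasible point of \eqref{eq:fleet_problem} at every realization where its designs are stabilized. Setting $J^\pi(\boldsymbol w)=+\infty$ elsewhere gives $J^\star(\boldsymbol w)\le J^\pi(\boldsymbol w)$ pointwise, and taking expectations gives \eqref{eq:wsbound}.

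Your caveat on the first claim goes beyond the paper, which does not argue it, but it needs sharpening. \Cref{prop:exact,prop:lex} give exactness only for each per-aircraft restriction \eqref{eq:joint_commit}, not for \eqref{eq:fleet_problem} under a fixed order, because a fuel-cheaper but later commitment raises $t^{\mathrm{req}}$ for every successor. Under the reading of $J^\star$ as the value the greedy pass attains, the pointwise inequality is no longer guaranteed: a greedy pass over the full menu can end with more slack than one over a restricted wind-blind menu.
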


\begin{proof}
Pointwise in $\boldsymbol w$, the wind-blind commitment of $\pi$ is feasible for \eqref{eq:fleet_problem} whenever it is stabilized, so $J^\star(\boldsymbol w) \le J^\pi(\boldsymbol w)$ at every realization at which $\pi$ is feasible, and $J^\pi = +\infty$ elsewhere. Taking expectations gives \eqref{eq:wsbound}.
\end{proof}

\bibliography{ref}

\end{document}